\documentclass[a4paper,10pt,english]{article}
\usepackage[utf8]{inputenc}
\usepackage[T1]{fontenc}
\usepackage{authblk}
\usepackage{amsmath,amssymb,amsthm,bm}
\usepackage{longtable}
\usepackage{array}
\usepackage{graphicx}
\usepackage[export]{adjustbox}
\usepackage{lipsum}
\usepackage{enumitem}
\usepackage{subcaption}
\usepackage{float}
\usepackage[]{sidecap}
\usepackage{wrapfig}
\usepackage[colorlinks=true, linkcolor=red, urlcolor=red, citecolor=red]{hyperref}
\usepackage{comment}
\usepackage{braket}
\usepackage[table]{xcolor}
\usepackage[all]{xy}
\usepackage{tikz,tikz-cd,tikz-3dplot}
\usepackage{quantikz}
\usepackage{xcolor,soul,tcolorbox}

\newtheorem{definition}{Definition}
\newtheorem{proposition}{Proposition}

\newtheorem{corollary}{Corollary}

\setlist[itemize]{itemsep=0pt, topsep=2pt, parsep=0pt, partopsep=2pt}
\setlist[enumerate]{itemsep=0pt, topsep=2pt, parsep=0pt, partopsep=2pt}

\title{SBN Explorer\\ \vspace*{2mm} \large An Empirical Study of Cryptographic Boolean Networks}
\author{Arnaud Valence}
\date{}

\begin{document}
	\maketitle
	\tableofcontents

\begin{abstract}
	Boolean circuits form the foundational computational substrate of symmetric cryptography, yet the exploration of their architectural design space has remained largely confined to a handful of canonical paradigms --- SPN, Feistel networks, and their immediate variants. This paper takes a deliberately broader perspective by formalizing the design space of cryptographic Boolean systems through six independent binary structural constraints: Stratification, Acyclicity, Regularity, Interleaving, Homogeneity, and Locality. These constraints generate a hypercube of $2^6 = 64$ distinct architectural classes defined over Synchronous Boolean Networks, a general model that subsumes both acyclic combinational circuits and recurrent synchronous systems. We systematically evaluate all 64 classes against three generic cryptanalytic fitness objectives --- differential, linear and algebraic resistance --- using a five-stage methodology centered on Formal Concept Analysis. The results reveal that the best Boolean networks are governed by the identification of sparse, mutually compatible combinations of constraints --- a fundamentally epistatic problem that classical cryptography has barely addressed.
	
	\medskip
	\noindent\textbf{Key Words:} Boolean circuits, Synchronous Boolean Networks, differential cryptanalysis, linear cryptanalysis, algebraic cryptanalysis, epistasis, Formal Concept Analysis, design space exploration.
\end{abstract}

\section{Introduction}  

Boolean circuits (BCs) constitute a central computational model in symmetric cryptography, where security ultimately emerges from the coordinated action of linear diffusion and nonlinear confusion mechanisms. In the standard design literature, this interaction is most often realized through highly structured architectures, the canonical example being the Substitution--Permutation Network (SPN), which alternates nonlinear substitution layers with linear diffusion layers. Although this paradigm has proved remarkably successful, it also tends to narrow the architectural space explored in practice, by implicitly treating one particular organizational principle as a quasi-default design pattern.

This work adopts a deliberately more general perspective, independent of any specific architecture. In particular, the principles of acyclicity and alternation of linear and nonlinear operators are no longer considered fundamental requirements, but rather possible architectural properties that can be enabled or disabled.

More precisely, our framework is based on six independent binary structural constraints, generating a design space of $2^6 = 64$ distinct architectural classes. This combinatorial approach allows us to move beyond isolated, manually designed topologies and to study, within a unified formal framework, how different structural prescriptions influence the cryptographic potential of BCs.

The motivation of the paper is therefore to systematically explore alternative design strategies for cryptographic BCs. By enlarging the admissible architectural space beyond conventional acyclic and SPN-centered constructions, we aim to identify whether non-standard structural organizations can also support strong cryptographic behavior, and more generally to clarify which architectural principles are genuinely beneficial for the design of robust cryptographic Boolean systems.

\paragraph{Limitations of Existing Approaches.}

Existing approaches to the design of cryptographic BCs remain largely dominated by a small number of canonical architectural paradigms, such as SPNs, Feistel networks, and closely related round-based architectures built from successive nonlinear and linear transformation stages. This has produced highly successful designs, but it has also constrained the exploration of the structural design space. In most cases, the architecture is fixed \emph{a priori}, and the analysis focuses on optimizing components inside an already accepted organizational scheme rather than questioning the scheme itself.

A first limitation is the strong implicit bias toward acyclic and strictly layered models. This preference is justified not only by formal considerations -- combinational circuits are traditionally defined as memoryless input-output applications -- but also by practical material constraints -- acyclic structures are significantly easier to synthesize, verify, and optimize. Consequently, much of the literature considers acyclicity an essential condition for circuit correctness, which immediately excludes a broader class of synchronous Boolean systems with feedback. However, potentially relevant non-standard architectures are not compared on an equal footing with conventional ones, but are discarded at the modeling stage.

A second limitation lies in the privileged status granted to the classical alternation between nonlinear and linear operators. This principle, historically linked to Shannon's insights \cite{shannonCommunicationTheorySecrecy1949a}, is often treated as an almost necessary design rule rather than as one structural option among others. Yet from a more abstract viewpoint, there is no reason to assume in advance that this specific arrangement exhausts the space of architectures capable of supporting strong cryptographic properties. Other structural principles may also influence resistance to differential, linear, or algebraic attacks, either positively or negatively, and their role remains insufficiently isolated.

A third limitation is methodological. Existing studies typically compare a small number of hand-designed architectures, often motivated by implementation practice or historical design traditions. Such comparisons are informative, but they do not provide a systematic exploration of the architectural hypercube generated by multiple independent structural constraints. In particular, they make it difficult to determine whether an observed cryptographic advantage is due to one constraint, to a combination of constraints, or simply to incidental design choices specific to a given family.

Finally, many existing analyses emphasize component-level properties -- for example, the quality of S-boxes, linear layers, or local propagation parameters -- without jointly addressing the global topology of the circuit as an explicit object of study. This creates a gap between local cryptographic criteria and global structural organization. From our perspective, what is still missing is a framework in which architectural constraints can be formalized independently, combined systematically, and evaluated quantitatively across a broad family of BCs.

These limitations motivate the present work: rather than starting from a predefined cryptographic architecture and refining it locally, we study the architecture itself as a variable, and we do so in a sufficiently general setting to include both standard BCs and more general synchronous Boolean networks.

\paragraph{Research Question and Contributions.}

The central question of this paper is whether the empirically measured cryptographic quality of a BC is intrinsically tied to a small set of conventional architectural principles, or whether comparably strong -- or possibly stronger -- properties can emerge from alternative structural organizations. Stated differently, we ask to what extent the global topology of a cryptographic Boolean system acts as a determining factor in its resistance to major forms of cryptanalysis.

To address this question, we start from a general class of Boolean functions and define six independent binary structural constraints capturing distinct architectural principles. This yields a finite design space of $64$ architecture classes, each corresponding to a specific combination of constraints. Within this framework, the precise research question can be formulated as follows: \emph{how do different combinations of independent structural constraints affect the cryptographic robustness of Boolean systems, when robustness is evaluated through generic criteria related to resistance against differential, linear, and algebraic cryptanalysis?}

In this sense, the paper is concerned with a general problem of architectural cryptography: determining how much of the security of a Boolean construction comes from its local components, and how much comes from the global organization of the circuit itself. The main contributions of this paper are the following.

First, we introduce a general architectural framework for cryptographic Boolean systems based on six independent binary structural constraints, defined over the broad class of Synchronous Boolean Networks (SBNs).

Second, we perform a systematic exploration of the resulting $64$ architecture classes, in order to compare their cryptographic behavior with respect to generic indicators related to differential, linear, and algebraic resistance.

Third, we investigate whether the space of high-performance architectures allows us to identify closed formal concepts of cryptographic robustness -- that is, recurring and stable combinations of architectural properties.

Finally, this article does not simply address the qualitative aspects of cryptographic robustness, i.e., formal concepts; it also highlights the quantitative contributions of relevant cryptographic concepts through the spectral variance decomposition of the hypercube of cryptographic architectures.

\paragraph{Roadmap.}

The remainder of the paper is organized as follows. Section~2 reviews the existing literature on cryptographic BCs and non-standard structural approaches. Section~3 introduces the formal framework of the study and recalls the main definitions used throughout the paper, including BCs, SBNs, architectural constraints, and cryptographic evaluation criteria. Section~4 presents the methodology, namely the statistical processing pipeline --- first qualitative, then quantitative --- used to derive cryptographic robustness results. Section~5 reports the main results of the systematic exploration and comparative analysis of the $64$ architecture classes. Section~6 discusses the structural meaning of these results and puts them into perspective with respect to standard cryptographic BCs. Finally, Section~7 concludes the paper and outlines several directions for future research. The appendix also includes certain formal and mathematical details of our analysis, as well as a technical description of the “SBN Explorer” program, which is available on GitHub.

\section{Related Work}  

\subsection{BCs in Cryptographic Design}

BCs constitute the fundamental computational substrate of symmetric cryptography. At a sufficiently low level of abstraction, any block cipher, stream cipher, or permutation-based primitive can be modeled as a BC implementing a Boolean function $F : \{0,1\}^n \rightarrow \{0,1\}^n$. Accordingly, cryptographic BCs have emerged as a central domain within BC theory and are now supported by a mature and substantial research literature \cite{wuBooleanFunctionsTheir2016,cusickCryptographicBooleanFunctions2017,carletBooleanFunctionsCryptography2020}.

This circuit-level view is independent of higher-level design paradigms and corresponds directly to the physical implementation of cryptographic algorithms in hardware or bit-sliced software. Consequently, BCs provide a natural framework for studying cryptographic transformations without assuming a particular architectural template.

\paragraph{Classical structured constructions.}

Despite this general nature, most modern symmetric primitives employ highly structured circuit topologies. Two dominant design paradigms illustrate this trend.

\begin{itemize}
	\item \textbf{Feistel networks.} The circuit is partitioned into two halves updated iteratively through round functions and XOR mixing. 
	\item \textbf{Substitution–Permutation Networks (SPN).} Functions are decomposed into alternating layers of linear transformations (diffusion layers through permutation boxes) and nonlinear transformations (confusion layers through substitution boxes). 
\end{itemize}

These architectures impose strong structural constraints on the underlying BC. The circuit is typically layered, acyclic, and organized into homogeneous rounds. Such regularity facilitates analysis, modular reasoning, and provable bounds on differential or linear propagation.

\paragraph{Circuit-level viewpoint versus architectural templates.}

From a purely Boolean perspective, however, these constructions represent only a very small subset of the space of valid circuits. A general BC may combine nonlinear and linear operations arbitrarily, exhibit irregular connectivity patterns, or distribute diffusion and confusion across the entire graph rather than separating them into explicit layers.

Historically, exploring this broader design space has proven difficult for two reasons:

\begin{itemize}
	\item \textbf{Analytical tractability.}  
	Most classical security arguments rely on structural regularities (round structure, S-box decomposition, independence assumptions).  
	When these assumptions are removed, the standard proof techniques become inapplicable.
	\item \textbf{Design methodology.}  
	Human-designed primitives typically rely on modular building blocks that simplify reasoning and implementation.  
	Unstructured circuits are difficult to design manually while maintaining cryptographic guarantees.
\end{itemize}

As a consequence, cryptographic research has traditionally focused on families of circuits exhibiting strong architectural regularity.

\paragraph{Emerging exploration of unconventional circuit topologies.}

Recent developments in automated design --- such as evolutionary search, reinforcement learning, and SAT-based optimization --- have started to explore the space of BCs more directly. These approaches often operate on circuit representations without imposing a predefined SPN or Feistel template.

Empirically, such methods tend to generate circuits with unconventional structural properties, including:

\begin{itemize}
	\item interleaving of nonlinear and linear operations within the same subgraph,
	\item dense cross-dependencies between input and output bits,
	\item heterogeneous logical depths across the circuit,
	\item compact blocks where diffusion and nonlinearity are internally entangled,
	\item or even feedback structures when the model allows synchronous updates.
\end{itemize}

These architectures do not conform to the classical layered view of symmetric cryptographic design. Nevertheless, they remain perfectly valid BCs implementing deterministic transformations over $\{0,1\}^n$.

\paragraph{Towards non-standard BC topologies.}

The existence of such architectures raises a fundamental question:  
to what extent are the structural constraints of classical designs intrinsic to cryptographic security, and to what extent are they merely methodological choices that simplify analysis?

Investigating this question requires considering BCs beyond the traditional SPN or Feistel frameworks. In particular, one may allow:

\begin{itemize}
	\item non-layered combinations of linear and nonlinear operations,
	\item cross-block dependencies and irregular connectivity patterns,
	\item conditional wiring structures such as multiplexed permutations,
	\item or even controlled feedback leading to synchronous dynamics.
\end{itemize}

The following section examines several categories of such non-standard circuit topologies and discusses their structural properties and potential cryptographic implications.

\subsection{Non-Standard Topologies and Emerging Structures in Cryptographic BCs}

From a broader circuit-design perspective, symmetric cryptography has long relied on a richer set of Boolean dependency graphs than the canonical SPN and balanced Feistel templates. Generalized Feistel schemes, introduced in the provable-security line of Zheng, Matsumoto, and Imai and later refined by Suzaki and Minematsu, showed that increasing the number of branches and carefully choosing the inter-branch permutation could improve diffusion and security-efficiency trade-offs beyond the classical two-branch Feistel setting \cite{ZhengMatsumotoImai1990, SuzakiMinematsu2010}. Lai-Massey constructions, first proposed by Lai and Massey and later recast by Vaudenay and by Yun, Park, and Lee within the broader quasi-Feistel framework, provided an alternative invertible round organization in which Feistel-style security arguments can be transferred while preserving a different symmetry of data mixing \cite{LaiMassey1991,Vaudenay1999,YunParkLee2011}. ARX designs pushed this departure further by replacing S-box-centered nonlinearity with modular addition, rotations, and XOR; in particular, Dinu et al. showed with SPARX that such word-oriented constructions can still admit explicit differential and linear security arguments through the long-trail strategy \cite{DinuEtAl2016}.

In stream-cipher design, feedback-based architectures such as Grain and Trivium rely on recurrent shift-register topologies rather than acyclic round functions, showing that cryptographic strength may arise from sparse local feedback and long state evolution rather than from homogeneous layered circuits \cite{HellJohanssonMeier2007, DeCannierePreneel2008}. Cellular-automata-based generators pursue a related logic at the level of local update rules; already Wolfram emphasized that simple one-dimensional cellular automata could serve as keystream generators, opening a line of research in which global complexity emerges from uniform neighborhood interactions \cite{Wolfram1986}.

At the implementation level, LS-designs made compact bitslice structures a deliberate design objective by combining regular Boolean circuitry with efficient masked software realizations \cite{GrossoEtAl2014}. More recently, automated frameworks such as CLAASP have provided a unified representation and analysis environment for heterogeneous symmetric constructions, which is especially relevant for systematic comparisons of unconventional circuit topologies \cite{BelliniEtAl2024}. Taken together, these lines of work show that the history of symmetric design already includes many families that relax at least one canonical assumption of standard block-cipher design, whether explicit separation of confusion and diffusion, strict layer homogeneity, round-wise modularity, or acyclicity of the dependency graph.

\paragraph{From canonical round structures to broader circuit families.}

Classical SPNs enforce a clean alternation between nonlinear substitution layers and linear diffusion layers. This decomposition is analytically convenient because it supports round-based arguments such as active S-box counting, branch-number reasoning, and Wide-Trail-style proofs \cite{DaemenRijmen2001}. Feistel networks retain a similarly regular round structure while relaxing the requirement that the round function itself be invertible, a property that helps explain their central role in provable block-cipher design \cite{LubyRackoff1988}. Yet several important families already depart from this strict organization. Generalized Feistel Networks (GFNs) distribute the state over more than two branches and allow richer inter-branch coupling patterns than the classical balanced Feistel construction; this broadens the space of admissible dependency graphs and has led to refined design strategies for improving diffusion across branches \cite{Nyberg1996, SuzakiMinematsu2010}. Likewise, Lai-Massey schemes occupy an intermediate position between the SPN and Feistel paradigms: they rely on a dedicated symmetric mixing operation between state halves, yielding an invertible round structure that is neither purely substitution-permutation nor purely Feistel \cite{LaiMassey1990,Vaudenay1999,YunParkLee2011}. These families show that non-standard topology is not external to mainstream cryptography; it is already part of the established design landscape.

\paragraph{Hybrid and heterogeneous round organizations.}

A second line of work concerns block ciphers whose internal organization is deliberately heterogeneous across rounds or across successive phases of encryption. In such designs, the overall transformation is not obtained by iterating a single homogeneous round module, but by composing structurally distinct stages that play different cryptographic roles. A classical example is MARS, whose encryption process combines input and output whitening with unkeyed mixing phases surrounding a keyed cryptographic core. As a result, its structure is not well captured by the repeated motif
\[
\text{nonlinear layer} \;\rightarrow\; \text{linear layer} \;\rightarrow\; \text{key addition},
\]
but instead by a succession of functionally differentiated layers. From the viewpoint of Boolean-circuit topology, this already constitutes a clear departure from uniform round-based architectures such as canonical SPNs. This observation is important for the study of learned or evolved circuits: once heterogeneous round roles are accepted in manually designed ciphers, the emergence of irregular phase structures in automated constructions becomes much less surprising.

\paragraph{Interleaving of confusion and diffusion.}

A major departure from classical SPNs is the disappearance of a clean boundary between nonlinear and linear components. This feature is particularly visible in ARX constructions, where modular addition, XOR, and rotation are combined at word level \cite{DinuEtAl2016}. In such designs, nonlinearity is not isolated inside explicit S-boxes; rather, it arises from the carry propagation of modular addition and from the repeated interaction between arithmetic and bitwise operations. Diffusion, likewise, is not concentrated in a single permutation or matrix layer, but emerges progressively through repeated local mixing across words and bit positions \cite{DinuEtAl2016}. 

As a consequence, ARX-based cryptographic circuits typically exhibit several non-standard structural features:
\begin{itemize}
	\item no identifiable S-box layer;
	\item no explicit P-box or MDS layer in the classical SPN sense;
	\item nonlinearity distributed across the datapath rather than localized in a small family of boxes;
	\item security analyses based on ARX-specific techniques, such as dedicated differential and linear trail search or rotational cryptanalysis, rather than on active-S-box counting.
\end{itemize}

From a circuit perspective, ARX constructions therefore provide a canonical example of \emph{interleaved circuits}, in which confusion and diffusion are entangled within the same local operations \cite{BiryukovEtAl2016,KhovratovichNikolic2010}.

\paragraph{Triangular and data-dependent structures.}

Another important departure from standard round-based architectures is provided by triangular mappings, in particular T-functions. A T-function is a mapping in which the $i$-th output bit depends only on input bits of index at most $i$, under a fixed bit ordering \cite{KlimovShamir2004,KlimovShamir2005}. This induces a strongly asymmetric dependency graph: instead of balanced mixing across coordinates, information propagates according to a triangular causal order. From the viewpoint of classical block-cipher design, such a topology is neither layered in the SPN sense nor branch-based in the Feistel sense. Its cryptographic interest stems from the fact that many machine-oriented operations, including bitwise Boolean operations and several arithmetic operations, naturally satisfy this triangular dependency property when expanded at the bit level \cite{KlimovShamir2004}. This observation motivated a line of work on T-function-based cryptographic primitives and on their structural analysis \cite{KlimovShamir2005,Daum2005}.

A related, but distinct, form of topological deviation arises in data-dependent word operations. In RC5, for example, the amount of rotation is itself determined by intermediate data, so the effective local dependencies vary with the internal state \cite{Rivest1995}. Similar design choices also appear in MARS, whose use of data-dependent rotations contributes to a heterogeneous and non-standard internal organization \cite{NISTAESReport2001}. When such operations are expanded into Boolean circuits, they can be viewed as inducing value-controlled bit dependencies rather than a single fixed wiring pattern. The resulting circuit remains deterministic, but its effective propagation structure depends on internal control variables generated during the computation.

\paragraph{Cross-coupled dependencies and non-modular mixing.}

Several families of ciphers depart from strict blockwise modularity by introducing early dependencies between distant subblocks. In canonical SPNs, modularity is strong: nonlinear operations are localized in parallel S-boxes, while diffusion is delegated to a separate global linear layer. By contrast, some non-standard constructions create and recombine dependencies within the same phase, so that the boundary between local nonlinear processing and global mixing becomes less explicit.

This phenomenon is visible in several established design families. In generalized Feistel networks, the state is distributed over multiple branches whose update and permutation pattern induce richer inter-branch couplings than in the classical two-branch Feistel setting \cite{Nyberg1996,SuzakiMinematsu2010}. In ARX ciphers, modular additions, XORs, and rotations couple words through carries and word-level interactions, yielding dependency patterns that are not naturally decomposed into independent boxes followed by a separate diffusion layer \cite{DinuEtAl2016,BiryukovEtAl2016}. In LS-designs, the bitslice organization deliberately intertwines substitution and diffusion patterns in order to improve software masking efficiency and implementation regularity; subsequent work refined this approach and introduced extended LS variants with improved security-efficiency trade-offs \cite{GrossoEtAl2014,JournaultVariciStandaert2017}. At a finer implementation level, logic-minimized Boolean realizations of cryptographic components also show that compact subcircuits need not preserve a textbook gate-by-gate decomposition into clearly separated logical roles \cite{BoyarPeralta2010,BoyarPeralta2012}.

These examples show that non-standard topology may be motivated not only by cryptanalytic design choices, but also by implementation constraints such as compactness, regularity, and side-channel-aware efficiency.

\paragraph{Compressed circuits and distributed nonlinearity.}

A distinct but related theme is the progressive \emph{compression} of cryptographic functionality into compact Boolean subcircuits. This appears in hand-optimized realizations of core components such as S-boxes, both in compact hardware-oriented designs and in bitslice implementations \cite{Canright2005,RebeiroEtAl2006}. The Boyar--Peralta line of work then showed that standard components such as the AES S-box admit very compact gate-level realizations with low depth and low gate count, making explicit the extent to which a high-level cryptographic block can be rewritten as a tightly optimized Boolean network \cite{BoyarMatthewsPeralta2013,BoyarPeralta2012}.

This optimization perspective is directly relevant to emerging circuit topologies. Once a cryptographic component is expressed as a low-level Boolean network, the original conceptual boundaries between inversion, affine correction, linear post-processing, and auxiliary wiring may largely disappear. What remains is a compact directed graph of Boolean dependencies whose topology no longer mirrors the clean functional decomposition visible at the design level.

From this viewpoint, \emph{compressed nonlinearity} is not a separate primitive family, but rather a change of representation: a standard design may appear topologically non-standard once projected to the gate level. This observation matters for automated design, because a search procedure operating directly in circuit space may rediscover structures functionally close to known cryptographic components, but expressed as compact entangled subgraphs rather than as recognizable boxes and layers.

\paragraph{Feedback structures and recurrent cryptographic circuits.}

The strongest departure from the classical combinational block-cipher model is the introduction of feedback. In standard block-cipher descriptions, one evaluation of the primitive is represented by an acyclic combinational dependency graph. By contrast, many stream ciphers are inherently sequential: the next state is obtained from the current one through an iterative update of the form
\[
x^{(t+1)} = F(x^{(t)}).
\]
Such objects are therefore more naturally viewed as synchronous Boolean dynamical systems than as single combinational circuits.

Linear and nonlinear feedback shift registers provide the classical examples of this design principle. Trivium and the Grain family are emblematic in this respect: both rely on recurrent state updates and compact feedback functions rather than on a purely acyclic SPN- or Feistel-like round circuit \cite{DeCannierePreneel2008,HellEtAl2008}. More recent lightweight designs also illustrate the same shift in viewpoint. TinyJAMBU, for instance, is built around keyed permutations whose round function has an NLFSR-like structure with a very small nonlinear core, and whose security is studied at the level of the underlying iterative state transition rather than as a standard combinational round function \cite{SibleyrasEtAl2022}.

For the present discussion, the key point is that recurrent Boolean structures are already central in symmetric cryptography. Allowing non-acyclic Boolean graphs is therefore not an artificial generalization, but a natural way of aligning the theory of cryptographic Boolean circuits with an established family of feedback-based designs.

\paragraph{Cellular automata as local recurrent topologies.}

Cellular automata (CA) provide a particularly clear instance of non-standard topology based on locality and recurrence. A CA evolves by applying the same local update rule synchronously over an array of cells, so that global behavior emerges from uniform local interactions iterated over time. In cryptography, this model was explored early for pseudorandom generation, and its limitations were also analyzed from the cryptanalytic viewpoint, which already positioned CA primarily as recurrent generators rather than as one-shot combinational blocks \cite{Wolfram1986,MeierStaffelbach1991}.

More recent work has also investigated CA as a source of cryptographic components, especially through the study of bipermutive rules and CA-based S-boxes \cite{LeporatiMariot2014,MariotEtAl2019,GhoshalEtAl2018}. From a circuit perspective, the interest of CA lies in the combination of strict locality, structural regularity, and iterative state evolution. At the same time, this same locality implies that long-range dependencies can only emerge through repeated updates, which makes CA especially natural for pseudorandom generators and compact nonlinear components, rather than as direct substitutes for the wide diffusion layers of modern SPN block ciphers.

For the study of non-standard cryptographic Boolean circuits, CA therefore remain highly relevant: they provide a clean mathematical model of synchronous update, local constraints, and emergent global behavior.

\paragraph{Conditional routing and control-dependent wiring.}

A less standard, but conceptually important, departure from canonical round architectures arises when the effective propagation pattern depends on control variables. At the specification level, such mechanisms usually appear as data-dependent word operations rather than as explicit switch networks. A classical example is RC5, in which the amount of rotation is determined by intermediate data values \cite{Rivest1995}. When expanded at the Boolean level, this can be interpreted as a form of control-dependent local routing: the resulting dependency pattern remains deterministic, but it is no longer described by a single fixed wiring scheme.

Related phenomena also appear in more heterogeneous designs such as MARS. In the AES evaluation literature, MARS is repeatedly described as having a heterogeneous round structure and as relying in part on data-dependent rotations \cite{NISTAESReport2001}. From a circuit viewpoint, such mechanisms are naturally read as instances of conditional propagation, where part of the effective bit-level dependency graph is selected during the computation rather than fixed once and for all at design time.

This design direction remains much less standardized than SPN, Feistel, or ARX constructions. It is therefore best viewed not as a mature primitive family, but as a partially explored topological regime at the boundary between fixed wiring and value-controlled propagation.

\paragraph{Automated search, heuristic synthesis, and emergent structures.}

A recent methodological shift in symmetric cryptography is the growing use of automated tools for the analysis and local optimization of primitives and components. Mixed-integer linear programming, constraint programming, and SAT-based methods are now routinely used to search differential or linear characteristics, evaluate structural properties, and automate parts of the cryptanalytic workflow within fixed cipher families \cite{MouhaEtAl2012,SunEtAl2017,SunWW21,BelliniEtAl2024}. Logic minimization techniques have likewise been used to compress concrete Boolean components, showing that standard cryptographic blocks can admit compact low-level realizations with reduced gate count and depth \cite{BoyarMatthewsPeralta2013,BoyarPeralta2012}.

Most of this literature, however, still operates inside parameterized design spaces rather than over the full space of unrestricted Boolean circuits. In parallel, metaheuristic and evolutionary methods have produced a substantial body of work on the construction of Boolean functions satisfying cryptographic criteria, but usually at the function level rather than at the level of full circuit topology \cite{DjurasevicEtAl2023,PicekEtAl2015}. 

These developments remain directly relevant to non-standard topologies. Once optimization is performed over low-level Boolean descriptions rather than over fixed high-level templates, one naturally encounters irregular subgraphs, merged logical roles, compressed nonlinear structures, and asymmetric dependency patterns. In that sense, emergent topology is not yet a mature design family in its own right, but it already appears as a natural by-product of automated search in enlarged cryptographic design spaces.

\subsection{Systematic Exploration of Cryptographic Topologies}

The survey above suggests three main conclusions.

First, non-standard topology is not foreign to cryptography. It already appears in established families such as GFNs, Lai-Massey schemes, ARX ciphers, T-function-based constructions, NLFSR/LFSR-based stream ciphers, and cellular-automata-based designs.

Second, these families relax different structural constraints:
\begin{itemize}
	\item explicit separation of confusion and diffusion,
	\item homogeneous round repetition,
	\item modular decomposition into independent boxes,
	\item acyclicity of the evaluation graph,
	\item or static, state-independent wiring.
\end{itemize}
Hence, ``non-standard topology'' should not be understood as one single alternative model, but as a spectrum of departures from the canonical SPN/Feistel viewpoint.

Third, the literature remains fragmented. These families are usually studied in isolation and typically analyzed with their own vocabulary and proof techniques. As a consequence, it remains difficult to compare their structural properties and cryptographic robustness. What is still missing is a unifying circuit-level framework capable of describing these objects as instances of a broader class of cryptographic BCs, including architectures with interleaving, cross-coupling, conditional routing, depth heterogeneity, and possibly synchronous feedback. 

This gap motivates the present perspective: rather than treating such structures as isolated exceptions, one may consider them as points in a larger design space of cryptographic BCs, within which automated exploration can reveal previously unseen but nevertheless meaningful topologies.

Such a systematic and holistic analysis can initially be undertaken using analytical and mathematical methods. However, this approach is extremely challenging, and it is unclear whether cryptography possesses all the mathematical tools necessary to tackle this vast undertaking. The approach proposed in this paper is, rather, an empirical research protocol: we developed a genetic algorithm designed to optimize BCs against generic attacks, systematically for each of the 64 predefined architectural classes.

\paragraph{Exploration of under-studied circuit structures.}

A first objective of this framework is exploratory. By enumerating and analyzing families of circuits that are not tied to classical templates, the method makes it possible to investigate \emph{previously unexplored or weakly studied cryptographic topologies}. Such structures include circuits with interleaved linear and nonlinear components, cross-coupled dependency patterns, heterogeneous logical depths, or partially recurrent update rules. Because these circuits are generated under explicit structural constraints rather than manually designed, the exploration is not biased toward familiar architectures. This allows the discovery of circuit organizations that may not naturally arise from traditional design intuition.

\paragraph{Topology-level comparison of cryptographic structures.}

A second objective is comparative. Once circuits are grouped according to their structural constraints, the resulting classes of topologies can be analyzed collectively. This makes it possible to compare, at a structural level, the cryptographic behavior of different architectural families. For example, one may study how properties such as stratification, locality constraints, cross-block dependencies, or feedback influence diffusion, nonlinearity propagation, or algebraic complexity. Instead of analyzing individual cipher instances, the analysis operates at the level of \emph{topological classes of circuits}. This perspective complements classical cryptanalysis, which usually focuses on a single manually designed primitive.

\paragraph{Quantitative robustness evaluation via systematic topology exploration.}

The third objective is quantitative. To assess the cryptographic quality of each topological class, we use fitness functions that measure the robustness of the generated circuits with respect to several fundamental cryptanalytic dimensions: differential, linear, and algebraic resistances. The measurements obtained provide statistical estimates of cryptographic robustness for each of the 64 architectures studied, i.e., classes of cryptographic BCs. This allows for a quantitative comparison of topological design choices, instead of relying solely on qualitative arguments or isolated examples.

\paragraph{Toward a structural theory of cryptographic circuits.}

Taken together, these elements motivate a shift in perspective. Instead of treating block ciphers, stream ciphers, and other primitives as isolated algorithmic designs, the present approach views them as points in a broader landscape of cryptographic BCs. Systematic exploration of this landscape provides three complementary benefits:

\begin{itemize}
	\item identification of previously unexplored circuit topologies with promising cryptographic properties;
	\item structural comparison of architectural families that are usually studied independently;
	\item quantitative evaluation of the relationship between circuit topology and resistance to differential, linear, and algebraic cryptanalysis.
\end{itemize}

This topology-oriented perspective aims to complement traditional cipher design methodology by providing a more general framework for understanding how structural properties of BCs influence cryptographic security.

\section{Preliminaries and Definitions}  

\subsection{Booleans Circuits and Synchronous Boolean Networks}

\subsubsection{Logic Gates, Boolean Variables and Boolean Functions}

Let $\mathbb{B} = \{0,1\}$ denote the Boolean domain. A \emph{Boolean variable} is a variable taking values in $\mathbb{B}$, and a \emph{Boolean state} of dimension $n$ is a vector
\[
x = (x_1,\dots,x_n) \in \mathbb{B}^n.
\]
A \emph{logic gate} of arity $k$ is a map
\[
g : \mathbb{B}^k \to \mathbb{B}.
\]

The standard gates include negation $\neg$, conjunction $\wedge$, disjunction $\vee$, and exclusive-or $\oplus$. Any finite composition of such gates defines a Boolean transformation, and complete gate bases such as $\{\neg,\wedge,\vee\}$ or $\{\neg,\wedge,\oplus\}$ are sufficient to express arbitrary Boolean computations \cite{carletBooleanFunctionsCryptography2020}.

A \emph{scalar Boolean function} on $n$ variables is a map
\[
f : \mathbb{B}^n \to \mathbb{B},
\]
whereas a \emph{vectorial Boolean function} is a map
\[
F : \mathbb{B}^n \to \mathbb{B}^m,
\qquad
F(x) = \bigl(f_1(x),\dots,f_m(x)\bigr),
\]
with each coordinate function $f_j : \mathbb{B}^n \to \mathbb{B}$. In cryptography, substitution boxes, round functions, and more generally Boolean circuit realizations are naturally modeled as vectorial Boolean functions.

Boolean functions admit several equivalent representations, depending on whether one emphasizes exhaustive specification, structural realization, or algebraic analysis.

\paragraph{Truth tables.}
The most direct representation of a Boolean function $f : \mathbb{B}^n \to \mathbb{B}$ is its truth table, which lists the value of $f(x)$ for every input $x \in \mathbb{B}^n$. This representation is canonical and extensional: it completely determines the function, but its size grows exponentially with $n$, namely $2^n$ entries. For a vectorial function $F : \mathbb{B}^n \to \mathbb{B}^m$, one may equivalently use a table with $2^n$ rows and $m$ output columns.

\paragraph{Boolean graphs.}
A Boolean function may also be represented structurally by a labeled directed graph. In such a graph, source nodes represent input variables or constants, internal nodes represent logic gates, and designated sink nodes represent outputs. Edges encode functional dependence between intermediate quantities. When the graph is acyclic, this representation yields a classical combinational \emph{Boolean circuit}. When feedback edges are allowed, the same graph-theoretic language leads to Boolean networks, whose semantics must then be defined dynamically rather than purely combinationally. This viewpoint is central when studying topology, dependency structure, depth, fan-in, fan-out, or feedback patterns.

\paragraph{Algebraic Normal Form (ANF).}
Every scalar Boolean function $f : \mathbb{B}^n \to \mathbb{B}$ admits a unique polynomial representation over $\mathbb{F}_2$, called its \emph{Algebraic Normal Form}:
\[
f(x)
=
\bigoplus_{u \in \mathbb{B}^n}
a_u \prod_{j=1}^n x_j^{u_j},
\qquad a_u \in \mathbb{F}_2,
\]
with the convention $x_j^0 = 1$ and $x_j^1 = x_j$. This representation is unique because Boolean polynomials are considered modulo the relations $x_i^2 = x_i$. For a vectorial Boolean function
\[
F = (f_1,\dots,f_m) : \mathbb{B}^n \to \mathbb{B}^m,
\]
the ANF is defined coordinate-wise from the ANFs of the coordinate functions $f_j$. The ANF is especially important in cryptography, since it directly exposes monomial structure, algebraic degree, and multivariate dependencies.

\subsubsection{Acyclic Boolean Circuits}

A \emph{Boolean circuit} is a finite directed graph
\[
C = (V,E,I,O,\lambda),
\]
where $V$ is the set of nodes, $E \subseteq V \times V$ the set of directed wires, $I \subseteq V$ the set of input nodes, $O \subseteq V$ the set of designated output nodes, and $\lambda$ assigns to each non-input node $v \in V \setminus I$ a Boolean gate
\[
\lambda(v) : \mathbb{B}^{\deg^-(v)} \to \mathbb{B}
\]
compatible with its in-degree $\deg^-(v)$. The circuit is said to be \emph{acyclic} when its underlying directed graph contains no directed cycle.

For an input assignment $u \in \mathbb{B}^{|I|}$, a valuation
\[
\nu_u : V \to \mathbb{B}
\]
is said to be \emph{locally correct} if (i) $\nu_u$ coincides with $u$ on the input nodes; (ii) for every non-input node $v \in V \setminus I$ with predecessors $$\operatorname{pred}(v) = (p_1,\dots,p_k),$$ one has $$\nu_u(v)=\lambda(v)\bigl(\nu_u(p_1),\dots,\nu_u(p_k)\bigr).$$

In the classical combinational setting, global correctness is ensured by acyclicity. Indeed, an acyclic directed graph admits a topological ordering, so the values of all nodes can be computed by a finite and unambiguous forward evaluation. Hence, for every input assignment $u$, there exists a unique locally correct valuation $\nu_u$, and the circuit induces a well-defined Boolean map
\[
F_C : \mathbb{B}^{|I|} \to \mathbb{B}^{|O|},
\qquad
F_C(u) = \nu_u|_O.
\]

In this sense, the classical correctness criterion is fundamentally \emph{structural}: local gate consistency together with global acyclicity guarantees existence and uniqueness of the computed output. This is precisely the point that will later motivate the transition from acyclic BC to Boolean networks equipped with an explicit dynamical semantics.

\subsubsection{Boolean Networks and Synchronous Semantics}

A \emph{Boolean network} of dimension $n$ is a family of local update functions
\[
\Phi = (\phi_1,\dots,\phi_n),
\qquad
\phi_i : \mathbb{B}^n \to \mathbb{B},
\]
which together define a global transition map
\[
\Phi : \mathbb{B}^n \to \mathbb{B}^n,
\qquad
\Phi(x) = \bigl(\phi_1(x),\dots,\phi_n(x)\bigr).
\]
Equivalently, a Boolean network may be represented by a directed interaction graph whose nodes carry local Boolean update rules \cite{kauffman1969metabolic,richard2019fds}.

Under \emph{synchronous semantics}, all coordinates are updated simultaneously at discrete time steps:
\begin{equation}
	x(t+1) = \Phi\bigl(x(t)\bigr).
	\label{eq:sbn}
\end{equation}
A Boolean network equipped with this synchronous update rule is called a \emph{Synchronous Boolean Network} (SBN).

Thus, an SBN defines a deterministic discrete-time dynamical system over the finite state space $\mathbb{B}^n$. Since $\mathbb{B}^n$ is finite and $\Phi$ is a total map, every initial state $x(0)$ generates a unique trajectory
\[
x(0),x(1),x(2),\dots
\]
which eventually becomes periodic, possibly reaching a fixed point \cite{kauffman1969metabolic,aracena2009robustness}.

The key point is that the interaction graph of an SBN may contain directed cycles. Under synchronous semantics, such cycles are not pathological: they simply indicate that computation is no longer combinational and static, but dynamical and iterative.

From this perspective, the correctness criterion is no longer structural, as in classical acyclic circuits, but semantic and dynamical.

\paragraph{Local correctness.}
Each local rule
\[
\phi_i : \mathbb{B}^n \to \mathbb{B}
\]
must be a well-defined Boolean function. Equivalently, every node is assigned an unambiguous Boolean update rule compatible with its designated dependencies.

\paragraph{Global correctness.}
The global transition map
\[
\Phi : \mathbb{B}^n \to \mathbb{B}^n
\]
must be total and deterministic. Under this condition, every initial state determines a unique trajectory under the recurrence \eqref{eq:sbn}. Hence, the network is globally well-defined even when its interaction graph contains feedback cycles.

If, in addition, one specifies an initialization map
\[
\iota : \mathbb{B}^{n_{\mathrm{in}}} \to \mathbb{B}^n,
\]
a readout map
\[
\pi_O : \mathbb{B}^n \to \mathbb{B}^{n_{\mathrm{out}}},
\]
and an evaluation horizon $T \geq 1$, then the SBN induces a well-defined input--output transformation
\[
F_{\Phi,T}
=
\pi_O \circ \Phi^T \circ \iota.
\]
Thus, for SBNs, well-posedness means existence and uniqueness not of a static valuation obtained by topological propagation, but of a trajectory and of the output extracted from it after a prescribed number of synchronous iterations.

This contrasts with the classical combinational criterion for acyclic Boolean circuits, where global correctness follows from local gate consistency together with acyclicity of the wiring graph. In the SBN setting, acyclicity is no longer required: it is replaced by explicit dynamical semantics.

\subsubsection{Why Synchronous Boolean Networks in the Present Work}

The SBN formalism is adopted here because it strictly generalizes classical acyclic Boolean circuits while preserving the same local Boolean language. An acyclic circuit computes its output through a finite forward propagation along a topological order, whereas an SBN computes through repeated synchronous applications of a global transition map. The former may therefore be viewed as a special case of the latter, corresponding to a feedback-free dependency structure.

This generalization is important for the present work for three reasons.

First, it allows the exploration of \emph{non-acyclic topologies}. Restricting the design space to directed acyclic graphs excludes by construction all feedback-rich architectures. Such a restriction is natural in classical combinational logic, but it is unnecessarily narrow when the objective is to investigate cryptographic structures from a broader topological viewpoint.

Second, SBNs provide a \emph{well-posed semantics} in the presence of cycles. In a classical circuit, a directed cycle breaks the direct topological evaluation principle, so acyclicity is imposed as a sufficient condition for well-defined computation. In an SBN, by contrast, computation is defined by the synchronous recurrence \eqref{eq:sbn}, so no topological ordering is needed. The relevant global criterion is the existence of a uniquely defined evolution for every initial state.

Third, SBNs naturally match the logic of \emph{iterated cryptographic computation}. A synchronous update step can be interpreted as one round of computation, and the effective input--output map is obtained after a prescribed number $T$ of iterations. This framework is therefore particularly convenient for comparing standard and non-standard topologies under a common cryptographic evaluation protocol.

Therefore, acyclicity remains a sufficient correctness criterion in the classical circuit model, but it is no longer a necessary one in the SBN model. This distinction is central here: it allows us to study Boolean architectures whose computational meaning remains perfectly well-defined even though their topology is not acyclic. For this reason, SBNs provide the appropriate formal framework for the systematic exploration of unconventional cryptographic Boolean structures.

\subsection{Design Space: $2^6=64$ Architectures}

As we have seen, a BC can be constrained to be acyclic, which facilitates its analysis. But this is not the only constraint that can be imposed on BC. We now define the six constraints defining 64 architectures.

\subsubsection{Definition of the Six Constraints}


\paragraph{S -- Stratification.}
The network is said to be stratified in the SPN sense when there exist an integer \(d\ge 2\), a layer map $\ell:V\to\{1,\dots,d\}$, and a type map
$\tau:\{1,\dots,d\}\to\{\mathsf S,\mathsf P\}$, such that $\tau(k+1)\neq \tau(k) \qquad\forall k\in\{1,\dots,d-1\}$, and, for every vertex \(v\in V\),
$$v\in \ell^{-1}(k)\ \Longrightarrow\ f_v \text{ is of type } \tau(k).$$
Thus, stratification refers here to an alternating substitution--permutation layer structure, rather than to a feed-forward constraint on edges. In particular, this notion of stratification does not imply acyclicity. No constraint is imposed on the edge set with respect to the layer order.

\paragraph{A -- Acyclicity.}
The network is acyclic when its structural graph has no directed cycle:
\[
A(\mathcal{C})=1
\iff
G=(V,E)\text{ is a directed acyclic graph}.
\]
If \(A(\mathcal{C})=0\), the graph may contain feedback loops, and the SBN is then understood dynamically through parallel iteration of its global update map.

\paragraph{R -- Regularity.}
The network is regular when all input-to-output minimal propagation times coincide:
$$R(\mathcal{C})=1\iff
\exists T\in \mathbb{N}\ \text{such that}\ \forall x\in V_{\mathrm{in}},\ \forall y\in V_{\mathrm{out}},$$ $$\tau_G(x,y) := \min \{ t \ge 0 \mid y \text{ depends on } x \text{ after } t \text{ iterations}\}.$$

In the feed-forward case, when \(V_{\mathrm{in}}\) and \(V_{\mathrm{out}}\) are the source and sink sets, this reduces to the classical equal-depth condition.
In the recurrent case, it remains meaningful as a purely dynamical equal-delay condition.

\paragraph{I -- Interleaving.}
The network is interleaved when at least one edge crosses the prescribed block decomposition:
\[
I(\mathcal{C})=1
\iff
\exists (u,v)\in E\ \text{such that}\ b(u)\neq b(v).
\]
where $b:V \to \{1,\dots,B\}$ is a fixed block map. Equivalently,
\[
I(\mathcal{C})=0
\iff
\forall (u,v)\in E,\qquad b(u)=b(v).
\]
Thus \(I\) measures cross-block coupling independently of layering, acyclicity, or locality.

\paragraph{H -- Homogeneity.}
The network is homogeneous when each prescribed layer carries a unique local rule up to permutation of inputs:
\[
H(\mathcal{C})=1
\iff
\forall k\in \{1,\cdots,d\},\ \exists r_k\in \mathbb{N},\ \exists \varphi_k:\{0,1\}^{r_k}\to\{0,1\}
\]
such that every vertex \(v\in V_k\) has indegree \(r_k\) and
\[
\forall v\in V_k,\qquad f_v=\varphi_k\circ P_v
\]
for some permutation \(P_v\) of the \(r_k\) input coordinates.
Hence \(H\) is a symmetry condition on local update rules relative to the fixed layer partition, not a consequence of \(S\).

\paragraph{L -- Locality.}
The network is local when every interaction stays within the prescribed geometric radius:
$$L(\mathcal{C})=1 \iff \forall (u,v)\in E,\qquad \rho(u,v)\le \delta_0.$$
where $(V,\rho)$ is a fixed metric space. Because \(\delta_0\) is fixed in advance, \(L\) is a genuine design predicate and not a vacuous bounded-diameter statement.

\medskip

We prove below that these six predicates are logically independant. In particular, no predicate is included within another, and any observed correlation between them is empirical rather than logical. In doing so, the predicate set $\Omega := \{S, A, R, I, H, L\}$ forms a logically independent set of predicates.

\subsubsection{Motivation}

The predicate set $\Omega := \{S, A, R, I, H, L\}$ should be understood as a structural coordinate system on the space of well-formed SBN architectures. These predicates are not security criteria in themselves. Rather, they define architectural priors whose cryptographic effects can subsequently be evaluated through differential, linear, or algebraic indicators. This separation between \emph{construction variables} and \emph{security observables} is methodologically important: it prevents the search space from being hard-coded around a single attack model, while keeping the description of architectures sufficiently abstract to cover both classical and non-standard designs.

A useful exploratory basis must satisfy four requirements. First, it must be \emph{general}: the constraints should apply to layered feed-forward circuits as well as to recurrent SBNs. Second, it must be \emph{interpretable}: each constraint should correspond to a distinct structural mechanism. Third, it should be \emph{minimally redundant}: the descriptive power should not come from hidden reformulations of the same property. Fourth, it must be \emph{composable}: admissible conjunctions of constraints should define meaningful subclasses of SBNs. The present family is designed to satisfy these four requirements.

\paragraph{(i) Stratification.}
$S$, i.e.\ S/P typing, isolates architectures organized by an alternating substitution--permutation pattern. It captures the canonical SPN design grammar without imposing any constraint on edge orientation or propagation. Its role is not to enforce feed-forward depth, but to encode the alternation between nonlinear confusion layers and linear diffusion layers. Thus, \(S\) defines a natural reference class for comparing more general Boolean-network architectures, including non-standard and recurrent topologies.

\paragraph{(ii) Acyclicity.}
Acyclicity is the primary bifurcation between combinational and dynamical behavior. When \(A=1\), the network computes by finite propagation on a DAG. When \(A=0\), the same object becomes a genuine synchronous dynamical system governed by parallel iteration \(x(t+1)=F(x(t))\). Keeping \(A\) explicit is therefore essential if one wants to treat feedback not as a pathology, but as a first-class architectural degree of freedom.

\paragraph{(iii) Regularity.}
Regularity controls the isotropy of information propagation. In the acyclic setting, equal input-to-output minimal propagation lengths eliminate structural latency bias and prevent circuit depth from being confounded with routing irregularities. In the recurrent setting, the dynamic reinterpretation of \(R\) as equal minimal propagation time between designated boundary sets preserves the same design intent: all parts of the network should communicate on comparable timescales. Thus \(R\) is not a constraint on local logic, but on the temporal symmetry of influence propagation.

\paragraph{(iv) Interleaving.}
Interleaving acts at the mesoscopic scale defined by a block decomposition. It distinguishes architectures in which dependencies remain confined inside blocks from architectures in which information crosses block boundaries. This is a structurally important axis because modularity and mixing are not opposites of depth or regularity: a network may be acyclic, stratified, and perfectly regular, while still being either strongly compartmentalized or strongly cross-coupled. In cryptographic terms, \(I\) influences the extent to which nonlinear effects can escape local compartments and spread across the architecture.

\paragraph{(v) Homogeneity.}
Homogeneity constrains the repertoire of local Boolean functions rather than the graph itself. It captures the difference between replicated architectures, in which one reuses the same local rule up to permutation of inputs, and heterogeneous architectures, in which distinct rules coexist. This distinction is essential for a general theory because symmetry can be either an asset or a liability: it may simplify design, analysis, and implementation, but it may also introduce structural regularities exploitable by an attacker. Since \(H\) acts on node labels rather than on edges, it operates at a different descriptive level from the topological constraints.

\paragraph{(vi) Locality.}
Locality introduces an ambient metric and constrains the span of interactions. This is indispensable if the model is meant to cover both short-range architectures, where diffusion emerges from repeated local propagation, and long-range architectures, where a few non-local connections induce rapid global mixing. The role of \(L\) is therefore twofold: it encodes implementation realism (wiring span, layout constraints, communication cost) and it distinguishes qualitatively different mechanisms of diffusion.

\paragraph{Reversibility.}
At the level of practical cipher instantiation, an additional issue arises: if one wishes to build a permutation on a fixed-size state space, a network composed of intrinsically non-injective gates such as \texttt{AND} and \texttt{OR} cannot be globally bijective unless it is embedded in a compensating reversible construction. In the present setting, a balanced Feistel structure provides a practical compromise, since it restores bijectivity at the 16-bit level while keeping 8-bit nonlinear components computationally tractable.

Under this compromise, the six structural constraints remain qualitatively meaningful, but not all of them retain the same granularity at the global 16-bit level. In particular, the interleaving predicate \(I\) becomes partly absorbed by the mandatory Feistel exchange between the two 8-bit branches: cross-branch mixing is always present at that scale. By contrast, \(I\) remains fully discriminative inside each 8-bit branch, which is its natural mesoscopic level in this implementation.

A larger 32-bit setting would have preserved the natural 16-bit granularity of all six constraints, but at prohibitive evaluation cost: exact computation of the full truth table, Walsh--Hadamard transform, difference distribution table, and algebraic normal form would require exhaustive processing of \(2^{32}\) inputs. We deliberately avoided Monte Carlo estimation in order not to inject statistical noise into the fitness signal used by the genetic algorithm. Another possible alternative would have been to replace non-reversible gates by reversible primitives such as 3-bit Toffoli gates, but this would substantially alter the semantic content of the \(S\) and \(H\) predicates by changing the nature of the local rule repertoire.

\paragraph{Composability.}
For any admissible subset $\Sigma \subseteq \Omega$, one may define the corresponding structural class by
\[
\mathcal{C}_{\Sigma} = \bigcap_{X\in\Sigma} \mathcal{C}_X,
\]
where \(\mathcal{C}_X\) denotes the class of well-formed SBNs satisfying constraint \(X\). This construction is methodologically valuable: it allows one to add or remove one structural prior at a time, compare adjacent architectural classes, and quantify interaction effects between constraints without changing the semantic definition of an SBN.

Thus the exploration space is not an arbitrary taxonomy but a controlled family of structurally meaningful classes. It is small enough to support systematic comparison, yet rich enough to include classical layered SPN-like architectures, irregular feed-forward topologies, and genuinely recurrent SBNs within the same formal language.

\paragraph{Cryptographic relevance.}
From the viewpoint of cryptographic research, this constraint basis has the right granularity. It is sufficiently coarse not to prejudge which architectural family is ``best'', and sufficiently fine to isolate the main structural levers that may influence resistance to differential, linear, and algebraic attacks: feedback versus feed-forward causality, balanced versus anisotropic propagation, modular separation versus cross-coupling, functional symmetry versus heterogeneity, and local versus non-local diffusion. $\Omega$ therefore provides a principled and general coordinate system for the exploration of SBNs, including non-standard topologies that would remain invisible in a purely DAG-based framework. The constraints act on different structural strata:
\begin{itemize}
	\item[(a)] causal structure and propagation geometry \((A,R)\),
	\item[(b)] mesoscopic coupling between blocks \((I)\),
	\item[(c)] functional layer organization and local rule symmetry \((S,H)\),
	\item[(d)] spatial or geometric interaction span \((L)\).
\end{itemize}

The architectural constraints thus cover sufficiently diverse conceptual aspects to support a broad comparative exploration.

\subsubsection{Logical Independence}

The family $\Omega$ is \emph{logically independent}. 

\begin{proposition}[Logical independence]
	Let $\Omega = \{S,A,R,I,H,L\}$ be the six predicates defined on enriched SBN. Then they are logically orthogonal in the following sense: for every $X\in\Omega$, there exist two enriched SBNs \(\mathcal C_X^+\) and \(\mathcal C_X^-\) such that $X(\mathcal C_X^+)=1$, $X(\mathcal C_X^-)=0$, while $Y(\mathcal C_X^+)=Y(\mathcal C_X^-)$ for all $Y\neq X$.	Consequently, no predicate among \(\Omega\) is logically implied by the conjunction of the other five.
\end{proposition}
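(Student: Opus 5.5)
The plan is to build, for each predicate $X\in\Omega$, an explicit pair of enriched SBNs that differ only in $X$. An enriched SBN here means a structural graph $G=(V,E)$ together with local rules $f_v$, a layer map $\ell$ with types $\tau$, designated input/output sets $V_{\mathrm{in}},V_{\mathrm{out}}$, a block map $b$, and a metric $\rho$ with threshold $\delta_0$.

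The key observation is that the six predicates read disjoint parts of this data:
\begin{itemize}
\item $S$ and $H$ read the labels $f_v$ relative to $\ell$;
\item $A$ and $R$ read the directed graph and the boundary sets;
\item $I$ reads the block map $b$;
\item $L$ reads $\rho$ and $\delta_0$.
\end{itemize}
We may therefore perturb one ingredient while freezing the others.

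\emph{Base network.} Fix a small base network $\mathcal C_0$. Take two layers $V_1=\{a_1,a_2\}$ and $V_2=\{c_1,c_2\}$ with edges $a_i\to c_j$ for all $i,j$. Every $a_i$ gets the same S-type rule $\varphi_1$ (say a 2-input AND, fed by $a_1,a_2$), and every $c_j$ gets the same linear rule $\varphi_2=\oplus$. Set $V_{\mathrm{in}}=V_1$ and $V_{\mathrm{out}}=V_2$. In this configuration $S=H=1$, and $A=R=1$ because all propagation delays equal $1$.

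Choose the block map $b$ so that $I=1$, for instance by giving $a_1$ and $c_2$ different blocks. Choose $\rho$ as the discrete metric with $\delta_0=1$, so that $L=1$.

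\emph{Perturbations.} Each toggle changes exactly one ingredient.
\begin{itemize}
\item \textbf{$I$.} Replace $b$ by a constant block map. This gives $I=0$ and leaves everything else unchanged.
\item \textbf{$L$.} Replace $\delta_0$ by $1/2$ while keeping the discrete metric, or rescale $\rho$ on one edge only. This gives $L=0$ and leaves everything else unchanged.
\item \textbf{$H$.} Change the rule at $c_2$ from $\oplus$ to XNOR. XNOR is still affine, so P-type and $S$ are preserved, and the dependency edges are unchanged. Since XNOR is not a permutation of $\oplus$, this gives $H=0$.
\item \textbf{$S$.} Change one node of layer~1 to a linear rule while keeping its in-degree. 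Layer~1 is then not of a single type, so $S=0$. However, $H$ would then also fail unless we simultaneously change $a_2$ to the same linear rule. That choice makes both layers P-type, which breaks alternation and gives $S=0$ with $H$ intact. The edges are untouched, so $A,R,I,L$ are unaffected.
\item \textbf{$A$.} Add a feedback edge $c_1\to a_1$ and, symmetrically, $c_2\to a_2$. We then need the in-degrees within each layer to stay equal, so that $H$ survives, and the added edges must respect $b$ and $\rho$. Setting $b(c_i)=b(a_i)$ and putting these pairs at distance $0$ or $1$ handles this.
\end{itemize}

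\emph{Hardest step: keeping $R$ fixed under the $A$ toggle.} The main obstacle is to ensure that $R$ does not change when the feedback is added. Minimal delays $\tau_G(x,y)$ are minima over paths, and adding back-edges from outputs to inputs cannot shorten any input-to-output distance here, since every input already reaches every output in one step. So $R$ stays $1$.

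\emph{Toggling $R$.} Conversely, to toggle $R$ alone, insert a third intermediate layer that lengthens one input-output path. This risks altering $S$ alternation and $H$ in-degrees, so it has to be done with care. The cleaner option is to remove the edge $a_1\to c_2$ and route it as $a_1\to c_1\to c_2$ inside the acyclic graph, compensating the in-degree with a duplicated edge and checking $H$ and $S$ again. This gives delay $2$ against $1$, hence $R=0$.

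I expect this bookkeeping to be the delicate part, because in-degree equality for $H$ couples the edges to the labels. It is resolved by choosing symmetric modifications across each layer.

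\emph{Conclusion.} The final "consequently" follows directly. For each $X$, the pair $(\mathcal C_X^+,\mathcal C_X^-)$ shows that the valuation of the other five predicates does not determine $X$, so $X$ is not implied by their conjunction.
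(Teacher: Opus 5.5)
\paragraph{Verdict.} Your plan is a legitimate reorganisation of the paper's argument, but as written the witness pair for $A$ does not exist.

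\paragraph{Comparison with the paper.} You use one base network and change a single ingredient per predicate. The paper instead builds a separate minimal witness pair for each predicate: mostly the path $a\to b\to c$ with one block and layers typed $\mathsf S,\mathsf P,\mathsf S$, and $a\to b$ versus $a\leftrightarrow b$ for $A$.

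\paragraph{The gap: your base network is already cyclic.} You give each $a_i$ a $2$-input AND ``fed by $a_1,a_2$''. A rule's arity is the in-degree of its node: the paper's $\lambda(v)$ has arity $\deg^-(v)$, and $H$ ties $r_k$ to in-degree. So this forces the edges $a_1\to a_i$ and $a_2\to a_i$, i.e.\ the self-loops $a_i\to a_i$ and the $2$-cycle $a_1\leftrightarrow a_2$. Hence $A(\mathcal C_0)=0$, not $1$. Your justification ``all propagation delays equal $1$'' is an argument for $R$, not for $A$.

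For the $S,R,I,H,L$ toggles this is harmless, since they only need $A$ to agree on both sides. The $A$ toggle, however, adds $c_1\to a_1$ and $c_2\to a_2$ to a graph that is already cyclic. Both networks then have $A=0$, so no witness pair for $A$ has been produced.

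\paragraph{Why the obvious repair fails, and a fix.} Deleting the self-feeding so that the $a_i$ are genuine sources runs into the arity constraint you flagged yourself. An $\mathsf S$-type (nonlinear) rule needs at least two inputs. It cannot sit on a node of in-degree $0$, and it cannot sit on a node of in-degree $1$ after a single feedback edge, because every $1$-ary Boolean function is affine. So the feedback must enter a $\mathsf P$-typed layer, or bring two new inputs per node.
\begin{itemize}
\item One fix: make layer $1$ the $\mathsf P$-layer, with affine rules identical across the layer (a constant when there are no in-edges, the identity once $c_j\to a_j$ is added). Move the AND to layer $2$. Then adjust the other toggles accordingly, e.g.\ AND versus OR at $c_2$ for $H$, and XOR at both $c_j$ for $S$.
\item Alternative: adopt the paper's convention of leaving source rules unspecified.
\end{itemize}

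\paragraph{Smaller points.}
\begin{itemize}
\item For $L$, use your second option. The paper fixes $\delta_0$ in advance, so the witnesses should change $\rho$, not $\delta_0$.
\item In the $R$ toggle no duplicated edge is needed, nor is one available since $E\subseteq V\times V$: $c_2$ keeps in-degree $2$ automatically.
\item You should check that $I$ survives deleting $a_1\to c_2$. It does: $b(a_1)\neq b(c_2)$ forces $a_1\to c_1$ or $c_1\to c_2$ to cross blocks.
\end{itemize}
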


\begin{proof}\label{proof}
	See appendix.
\end{proof}

\begin{corollary}
	Since the six predicates are logically independent,
	their joint valuation spans the Boolean cube $\{0,1\}^6$.
	Hence the architectural design space contains $2^6 = 64$ distinct classes.
\end{corollary}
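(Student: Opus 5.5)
The corollary claims more than the Proposition literally gives. The Proposition provides, for each $X\in\Omega$, a pair of networks that differ only in $X$. These six pairs may be anchored at six unrelated base points of the cube, so they do not by themselves show that every vector $\varepsilon\in\{0,1\}^6$ is attained. My plan is therefore to prove a stronger statement: there is a single base network $\mathcal C_0$, together with six \emph{commuting} local modifications $M_S,M_A,M_R,M_I,M_H,M_L$, such that $M_X$ toggles $X$ and leaves every $Y\neq X$ unchanged \emph{whatever subset of the other modifications has already been applied}. Given this, for $\varepsilon\in\{0,1\}^6$ I apply exactly those $M_X$ with $\varepsilon_X\neq X(\mathcal C_0)$, which realizes $\varepsilon$. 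Finally, the classes $\mathcal C_\varepsilon=\{\mathcal C : (S,A,R,I,H,L)(\mathcal C)=\varepsilon\}$ are pairwise disjoint, since they are fibres of a function, and each is nonempty, so there are exactly $2^6=64$ of them.

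\textbf{Base network.} I would take an acyclic, alternating S/P layered network on two blocks. The blocks and the metric $\rho$ are fixed so that some cross-block pairs are at distance $\le\delta_0$ while some within-block pairs are at distance $>\delta_0$; this decouples $I$ from $L$. All edges are kept within blocks and short, all input-to-output minimal times are made equal, and a single rule is used per layer. Thus $\mathcal C_0$ has valuation $(1,1,1,0,1,1)$.

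\textbf{The six modifications.} Each one acts on a separate ingredient of the network.
\begin{itemize}
\item[(i)] $M_S$ relabels one internal P-layer gate with a nonlinear rule, and does the same for all gates of that layer, so that homogeneity is preserved while alternation fails.
\item[(ii)] $M_H$ changes one gate to a different rule of the same type and the same indegree, which keeps $S$ and breaks $H$. This uses two distinct reserved nonlinear rules, so it does not interfere with $M_S$.
\item[(iii)] $M_A$ adds a feedback edge from an output node back to a node of its own block that is adjacent in the metric. A path through an output node is never shorter than the path to that output itself, so minimal input-to-output times do not change and $R$ is unaffected. Because the edge is short and stays within the block, $I$ and $L$ are unaffected, and $S$ and $H$ are unaffected as long as the receiving gate's indegree is padded by a dummy input in every variant.
\item[(iv)] $M_I$ adds a short edge between two metrically close nodes in different blocks. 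It runs from a node at depth $k$ to a node at depth $k+1$ and only duplicates distances that already exist, so it preserves $R$.
\item[(v)] $M_L$ adds a long within-block edge with the same depth bookkeeping.
\item[(vi)] $M_R$ lengthens one input-to-output route by inserting a delay chain on a reserved dedicated input. The chain alternates S and P nodes and uses the same per-layer rules, so it preserves $S$ and $H$.
\end{itemize}
To guarantee that the gadgets commute, each one is placed on a disjoint reserved set of nodes, and degree padding is fixed in advance so that indegrees are identical in all 64 variants.

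\textbf{Main obstacle.} The hard step is showing that $R$ is stable under every combination of the other edge additions, and that $M_R$ really breaks $R$ even after cycles are added. Feedback can create new dependencies over time, and cross-block edges can create shortcuts. I would handle this with a single invariant: every added edge goes from depth $k$ to depth at least $k+1$, or points into the output layer's own predecessors. A short induction on $t$ then shows that the minimal dependency time $\tau_G(x,y)$ equals the base depth for every $(x,y)$ except those on the delay chain of $M_R$. The remaining verifications of $S$, $H$, $A$, $I$ and $L$ are routine, by inspecting which labels and edges each gadget touches.
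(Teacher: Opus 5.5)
Your proposal is correct, and it takes a genuinely different route from the paper. The paper gives no argument beyond the Proposition: the corollary is simply read off from ``logical independence''. The appendix witnesses, built on different small graphs, all share the valuation $(1,1,1,0,1,1)$. So what is actually exhibited is that point and its six neighbours, i.e.\ $7$ of the $64$ valuations.

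You are right that ``each coordinate can be toggled somewhere'' does not give surjectivity onto $\{0,1\}^6$. What the statement needs is a single base network carrying six gadgets on disjoint supports, each toggling exactly one predicate whatever the others have done. Your fibre argument then gives exactly $2^6$ classes.

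The paper's approach does buy some brevity: it toggles $I$ by relabelling the block map and $L$ by changing the metric (witnesses 4 and 6), and no other predicate sees those operations. If you accept, as the paper does, that the block map and metric belong to the enriched data, you can do the same. Your $R$ bookkeeping then reduces to the interaction between the feedback edge of $M_A$ and the delay chain of $M_R$.

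Some details need repair.

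\begin{itemize}
\item The invariant in your last paragraph is backwards: an edge from depth $k$ to depth $k+2$ is a shortcut and lowers $\tau_G$. What your gadgets actually satisfy, and what you need, is $\mathrm{depth}(v)\le\mathrm{depth}(u)+1$ for every edge $(u,v)$. This covers the forward $k\to k+1$ edges and the feedback edge at once, and gives $\tau_G(x,y)\ge D$ by induction along paths.
\item That is only a lower bound. An added edge can create a dependency for a pair that had none, with time exceeding $D$. For example, the feedback edge $o\to w$ gives every input reaching $o$ paths of length at least $D+1$ to every output reachable from $w$.
\item Relatedly, with inputs and outputs in both blocks and no cross-block edge, $\mathcal C_0$ already has pairs with $\tau_G=\min\varnothing$. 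So $R(\mathcal C_0)=1$ depends on an unstated convention.
\item The clean fix is to put $V_{\mathrm{in}}\cup V_{\mathrm{out}}$ in one block, with every input reaching every output in exactly $D$ steps and at least one input besides the dedicated one. Then let the cross-block edge of $M_I$ end at a node of the other block that reaches no output.
\item Since $E\subseteq V\times V$ and $H$ fixes the indegree of each layer, the padding slot must be a genuine short, within-block edge in the variants that lack the gadget edge. For example, it could come from a constant source node outside $V_{\mathrm{in}}$, placed in a homogeneous layer of its own. Each gadget is then a rewiring of that slot rather than an added edge.
\item Keep arities at least $2$ wherever $M_S$ or the S-nodes of your chain need a nonlinear rule, since every unary Boolean function is affine.
\end{itemize}
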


Under the formal model above, the structural constraint set $\Omega$ defines six logically independent axes in the circuit design space. Their joint valuation spans the full Boolean cube $\{0,1\}^6$, yielding $2^6=64$ structural classes, which we call architectures.

This logical independence justifies treating the constraint system as a factorial design over the architectural fitness landscape.

\paragraph{Logical Independence, Empirical Dependence, and Local Effects}

The lattice-based validation does not indicate that the design constraints are biased, nor that they fail to be logically independent. Rather, it reveals a distinction between three fundamentally different levels of analysis.

\subparagraph{1. Logical (Combinatorial) Independence.}

Since every valuation in \(\{0,1\}^6\) is realizable, the design space forms a complete Boolean hypercube of size \(64\). No structural restriction forbids any constraint from co-occurring with any other. Therefore, the six predicates are \emph{logically independent}: the design space is combinatorially exhaustive.

\subparagraph{2. Statistical Dependence Within High-Performance Subsets.}

An FCA implication such as $\{X\}\Rightarrow \{Y\}$, $X,Y\in\Omega$, does not express a logical implication in the full design space. It only states that, within a selected high-performance subset, architectures satisfying \(X=1\) also tend to satisfy \(Y=1\). Formally, this means that
$\Pr(Y=1\mid X=1,\ \mathrm{elite})$ vs $\Pr(Y=1\mid \mathrm{elite})$, or, more generally, that \(Y\) is over-represented among elite architectures satisfying \(X\). This is a conditional statistical regularity inside a restricted region of the hypercube. Such associations may emerge even when the underlying design variables are fully independent at the combinatorial level. Hence, the implication reflects a \emph{context-dependent association}, not a structural constraint of the model.

\subparagraph{3. Local Effects on the Hypercube.}

Statistical dependence answers the question:
\begin{quote}
	Among high-performing architectures, do those with \(X=1\) also tend to have \(Y=1\)?
\end{quote}
But it does not answer:
\begin{quote}
	What is the effect of switching \(Y\) while keeping the rest of the architecture fixed?
\end{quote}

This second question is local and intervention-like. On an edge of the hypercube, it is measured by a quantity of the form $\Delta_Y(\mathcal C)=f(\mathcal C^{(Y=1)})-f(\mathcal C^{(Y=0)})$,
where the two architectures differ only in the value of constraint \(Y\). Thus, one moves from a global conditional association inside the elite subset to a local structural effect measured along an edge of the cube.

Consequently, the constraints remain combinatorially independent, while their fitness effects may be statistically dependent and strongly interaction-dependent.

Logical independence at the structural level therefore coexists with empirical dependence at the performance level. This distinction is central: the observed FCA rules reveal emergent interaction patterns in the fitness landscape, not bias or logical redundancy among the architectural constraints.



\subsubsection{Intuition and visualization}

Figure \ref{intuition} depicts the intuitive idea behind the structural constraints.

\begin{figure}[H]
	\centering
	\begin{subfigure}[t]{0.48\textwidth}
		\centering
		\includegraphics[page=1,width=55mm]{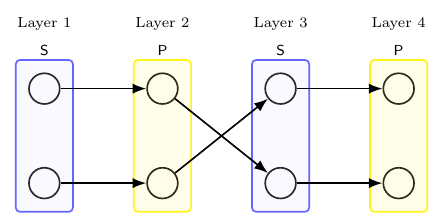}
		\caption*{[S] \textbf{Stratification}. Layers alternate between nonlinear substitution type (\(\mathsf{S}\)) and linear/affine permutation--diffusion type (\(\mathsf{P}\)).}
	\end{subfigure}
	\hfill
	\begin{subfigure}[t]{0.48\textwidth}
		\centering
		\includegraphics[page=2,width=55mm]{img/figures.pdf}
		\caption*{[A] \textbf{Acyclicity}. The graph admits a topological order: arrows always move forward in the causal ordering, and no structural feedback loop is present.}
	\end{subfigure}
	
	\vspace{1.5em}
	
	\begin{subfigure}[t]{0.48\textwidth}
		\centering
		\includegraphics[page=3,width=50mm]{img/figures.pdf}
		\caption*{[R] \textbf{Regularity}. All input-output minimal propagation times are equal. Regularity removes latency bias between distinct routes through the network.}
	\end{subfigure}
	\hfill
	\begin{subfigure}[t]{0.48\textwidth}
		\centering
		\includegraphics[page=4,width=55mm]{img/figures.pdf}
		\caption*{[I] \textbf{Interleaving}. At least one dependency crosses the prescribed block decomposition. The highlighted edge couples two otherwise distinct modules.}
	\end{subfigure}
	
	\vspace{1.5em}
	
	\begin{subfigure}[t]{0.48\textwidth}
		\centering
		\includegraphics[page=5,width=45mm]{img/figures.pdf}
		\caption*{[H] \textbf{Homogeneity}. Inside a given layer, all vertices implement the same Boolean rule (up to permutation of input wires).}
	\end{subfigure}
	\hfill
	\begin{subfigure}[t]{0.48\textwidth}
		\centering
		\includegraphics[page=6,width=55mm]{img/figures.pdf}
		\caption*{[L] \textbf{Locality}. Every edge must stay within the prescribed geometric radius $\delta_0$. The grey dashed arrow indicates a non-local shortcut excluded by the constraint.}
	\end{subfigure}
	\caption{Intuitive graph sketches for the six orthogonal predicates --- Each panel shows a canonical graph pattern illustrating the meaning of the predicate when it is satisfied.}
	\label{intuition} 
\end{figure}

\subsection{Reward and Fitness Functions}

In the proposed evolutionary framework, \emph{fitness functions} and \emph{reward functions} play two distinct but complementary roles. Fitness functions provide the global evaluation criteria used for selection, whereas reward functions supply local guidance signals that facilitate exploration of the search space.

\subsubsection{Fitness Functions}

In a cryptographic setting, the most meaningful fitness indicators are those directly derived from well-established cryptanalytic attacks. To remain applicable to a broad class of BCs---including S-boxes, nonlinear Boolean functions, and small round functions---we deliberately restrict the fitness family to three generic adversarial criteria:

\begin{itemize}
	\item[\textbf{F1}] Differential resistance
	\item[\textbf{F2}] Linear resistance
	\item[\textbf{F3}] Resistance to algebraic attacks
\end{itemize}

See Appendix for mathematical definitions. These three criteria correspond to the main paradigms of modern symmetric cryptanalysis and can be defined for arbitrary BCs without requiring specific assumptions on the internal organization of the circuit.

The selected fitness functions satisfy four key requirements for evolutionary cryptographic design:

\begin{itemize}
	\item \textbf{Attack-oriented evaluation.} Each fitness is directly associated with a well-identified cryptanalytic paradigm.
	\item \textbf{Model independence.} The definitions rely only on the Boolean input-output behavior of the circuit and do not presuppose any particular internal structure.
	\item \textbf{Applicability to generic BCs.} The metrics apply uniformly to S-boxes, Boolean networks, round functions, and other small cryptographic components.
	\item \textbf{Complementarity.} Differential and linear criteria capture statistical resistance, whereas algebraic resistance captures symbolic complexity.
\end{itemize}

Together, these three criteria form a minimal adversarial fitness core for evaluating the cryptographic quality of generic BCs. This core keeps the evolutionary selection focused on fundamental cryptanalytic hardness while preserving broad applicability across BC designs.

\subsubsection{Reward Functions}

Whereas fitness functions assess the global cryptographic quality of a candidate BC, they are often expensive to compute and may induce large plateaus in the search space. For this reason, the evolutionary process also relies on auxiliary \emph{reward functions}, whose role is not to measure security directly, but to provide smooth local signals indicating whether a mutation moves the circuit toward structurally harder regions of the design space.

Formally, let \(F : \mathbb{F}_2^n \rightarrow \mathbb{F}_2^m\) be a BC and let \(F'\) denote a mutated version of \(F\). Each reward is defined as the variation of a structural indicator between \(F\) and \(F'\). We restrict the reward design to six indicators capturing algebraic complexity, statistical uniformity, and structural entanglement:

\begin{itemize}
	\item[\textbf{R1}] Effective algebraic degree reward
	\item[\textbf{R2}] Algebraic normal form entropy
	\item[\textbf{R3}] Walsh spectrum flattening
	\item[\textbf{R4}] Local differential uniformization
	\item[\textbf{R5}] Dependency graph expansion
	\item[\textbf{R6}] Symmetry breaking reward
\end{itemize}

See Appendix for mathematical definitions. This reward core captures three complementary aspects of cryptographic hardness:

\begin{itemize}
	\item \textbf{Algebraic complexity.} Rewards R1 and R2 promote deep nonlinear interactions and dense algebraic representations.
	\item \textbf{Statistical uniformity.} Rewards R3 and R4 reduce the emergence of dominant linear and differential structures.
	\item \textbf{Structural entanglement.} Rewards R5 and R6 encourage strong variable mixing and penalize symmetric or repeated substructures.
\end{itemize}

Together, these indicators provide smooth local gradients while remaining distinct from the adversarial fitness functions that measure actual resistance to cryptanalysis.

\section{Methodology} 

We consider the complete constrained architecture space $\Omega = \mathbb{B}^n$, where each configuration $\mathbf{c}=(C_1,\dots,C_n)\in\Omega$ is assigned a scalar fitness value through a function $f:\mathbb{B}^n\to\mathbb{R}$. The methodology consists of six stages, from local effect analysis to global variance attribution.

\subsection{Stage 0: Multi-Level Discretization}

When categorical analyses are required, the scalar fitness is discretized into performance classes:

\begin{itemize}
	\item \textbf{Binary performance} (\texttt{Fit\_bin}): split at the median fitness value;
	\item \textbf{Elite performance} (\texttt{Fit\_elite}): restriction to the top $10$--$20\%$ of SBNs.
\end{itemize}

These two discretizations respectively capture coarse structural contrasts and highly selective architectural regimes.

\subsection{Stage 1: Hypercube Edge Analysis}

For each constraint $C_i$, we evaluate its marginal effect over all $2^{n-1}$ contexts induced by the remaining constraints. Writing $C_{-i}\in\mathbb{B}^{n-1}$, define
\[
\Delta f_i(C_{-i}) = f(C_{-i},C_i=1)-f(C_{-i},C_i=0).
\]

The resulting family $\{\Delta f_i(C_{-i})\}_{C_{-i}\in\{0,1\}^{n-1}}$ is summarized by:

\begin{itemize}
	\item the mean effect $\bar{\Delta}_i = 2^{-(n-1)}\sum_{C_{-i}}\Delta f_i(C_{-i})$;
	\item the effect variance $\sigma^2_{\Delta_i}=2^{-(n-1)}\sum_{C_{-i}}\left(\Delta f_i(C_{-i})-\bar{\Delta}_i\right)^2$;
	\item the sign stability, i.e., the proportion of contexts for which $\Delta f_i(C_{-i})>0$.
\end{itemize}

Large $\sigma^2_{\Delta_i}$ relative to $|\bar{\Delta}_i|$ indicates context dependence.

\subsection{Stage 2: Epistasis Matrix}

For each pair $(C_i,C_j)$ and each context $C_{-\{i,j\}}\in\mathbb{B}^{n-2}$, define
\[
\varepsilon_{ij}(C_{-\{i,j\}})=f_{11}-f_{10}-f_{01}+f_{00},
\]
where $f_{ab}=f(C_i=a,C_j=b,C_{-\{i,j\}})$.

The pairwise interaction is summarized by
\begin{align*}
	\bar{\varepsilon}_{ij} &= 2^{-(n-2)}\sum_{C_{-\{i,j\}}}\varepsilon_{ij}(C_{-\{i,j\}}),\\
	\sigma^2_{\varepsilon_{ij}} &= 2^{-(n-2)}\sum_{C_{-\{i,j\}}}\left(\varepsilon_{ij}(C_{-\{i,j\}})-\bar{\varepsilon}_{ij}\right)^2.
\end{align*}

This yields the epistasis matrix $\mathbf{E}=[\bar{\varepsilon}_{ij}]_{i,j=1}^n$.

Stable positive values identify synergistic pairs; stable negative values identify antagonistic pairs; large variances indicate context-dependent interactions.

\subsection{Stage 3: Formal Concept Analysis (FCA)}

To extract minimal structural patterns associated with high fitness, we define, for each discretization of Stage~0, a formal context $\mathbb{K}=(G,M,I)$, with
\begin{itemize}
	\item $G=\Omega=\mathbb{B}^n$;
	\item $M=\mathcal{C}\cup\{\text{high-performance}\}$, where $\mathcal{C}=\{C_1,\dots,C_n\}$;
	\item $I\subseteq G\times M$ the incidence relation.
\end{itemize}

A formal concept is a pair $(A,B)$ such that
\begin{align*}
	A' &= \{m\in M\mid \forall g\in A,\ (g,m)\in I\}=B,\\
	B' &= \{g\in G\mid \forall m\in B,\ (g,m)\in I\}=A.
\end{align*}

The concept lattice $\mathfrak{B}(\mathbb{K})$ is ordered by
\[
(A_1,B_1)\leq(A_2,B_2)\Longleftrightarrow A_1\subseteq A_2
\qquad
(\text{equivalently } B_2\subseteq B_1).
\]

From each lattice, we extract:

\begin{itemize}
	\item minimal generators: minimal subsets $B\subseteq\mathcal{C}$ such that $B\cup\{\text{high-performance}\}$ is closed;
	\item association rules $B_1\to B_2$ with support
	\[
	\mathrm{sup}(B_1\cup B_2)=\frac{|B_1'\cap B_2'|}{|G|}
	\]
	and confidence
	\[
	\mathrm{conf}(B_1\to B_2)=\frac{|B_1'\cap B_2'|}{|B_1'|}.
	\]
\end{itemize}

These objects provide minimal structural signatures of high-performance regions.

\subsection{Stage 4: Post-Lattice Validation}

Each pattern $\mathcal{M}\subseteq\mathcal{C}$ extracted at Stage~3 is tested on the restricted subcube
\[
(\mathbb{B}^n)^{\mathcal{M}}=\{C\in\mathbb{B}^n\mid \forall C_i\in\mathcal{M},\ C_i=1\}.
\]

For each $C_j\notin\mathcal{M}$, we evaluate whether adding $C_j$ significantly improves fitness by applying a Wilcoxon signed-rank test to the paired differences $\Delta f_j(C_{-j})$ within the restricted subspace. If no significant improvement is detected ($p>0.05$), then $C_j$ is treated as locally neutral relative to $\mathcal{M}$.

This step distinguishes structurally sufficient patterns from incomplete ones.

\subsection{Stage 5: Spectral Variance Decomposition on the Hypercube}

Since $f$ is observed on the full hypercube, all interaction orders are exactly identifiable. We therefore use an orthogonal Walsh--Hadamard decomposition.

\subsubsection{Stage 5a: Walsh--Hadamard Decomposition}

We re-encode each binary variable $C_i \in \{0,1\}$ into centered coordinates: $x_i = 2C_i - 1 \in \{-1,+1\}$. The fitness function admits a unique expansion over the Walsh basis:

\begin{equation*}
	f(x)=\sum_{S\subseteq\{1,\dots,n\}}\hat{f}(S)\chi_S(x), \qquad \chi_S(x)=\prod_{i\in S}x_i,
\end{equation*}
with coefficients $\hat{f}(S) = 2^{-6} \sum_{x \in \{-1,1\}^6} f(x)\,\chi_S(x)$.

Orthogonality gives $\sum_{x\in\{-1,1\}^n}\chi_S(x)\chi_T(x)=2^n\delta_{ST}$, hence
\[
\mathrm{Var}(f)=\sum_{S\neq\varnothing}\hat{f}(S)^2.
\]

For each order $k$, define the contribution of interaction order $k$ and the normalized epistatic spectrum, respectively
\[
V_k=\sum_{|S|=k}\hat{f}(S)^2,
\qquad
\eta_k^2=\frac{V_k}{\sum_{m=1}^n V_m}.
\]

Thus:
\begin{itemize}
	\item $k=1$ corresponds to additive effects;
	\item $k=2$ to pairwise epistasis;
	\item $k\geq 3$ to higher-order interactions.
\end{itemize}

\subsubsection{Stage 5b: Restricted Hierarchical Model}

For interpretability, we introduce a truncated descriptive model retaining:
\begin{enumerate}
	\item all first-order terms;
	\item the dominant second-order terms;
	\item indicator terms associated with selected FCA patterns $\mathcal{M}_k$.
\end{enumerate}

This yields
\begin{equation}
	\tilde{f}(x)=\mu
	+\sum_{|S|=1}\hat{f}(S)\chi_S(x)
	+\sum_{S\in\mathcal{S}_2}\hat{f}(S)\chi_S(x)
	+\sum_k \alpha_k\,\mathbb{I}[x\in\mathcal{M}_k],
\end{equation}
where $\mathcal{S}_2$ is the set of dominant pairwise terms.

This model is descriptive only. Exact variance attribution remains given by the Walsh spectrum.

\subsubsection{Final Artifact: Contribution Spectrum}

For each nonempty component $S$, define
\[
\eta_S^2=\frac{\hat{f}(S)^2}{\sum_{T\neq\varnothing}\hat{f}(T)^2}.
\]

This yields a complete contribution spectrum comprising:
\begin{itemize}
	\item additive effects ($|S|=1$),
	\item pairwise interactions ($|S|=2$),
	\item higher-order couplings ($|S|\geq 3$).
\end{itemize}

FCA-derived patterns are reported separately through their support, confidence, and conditional fitness gains.

\subsection{Methodological Synthesis}

The methodology proceeds from local to global structure:

\begin{itemize}
	\item Stage~1: marginal effects;
	\item Stage~2: pairwise epistasis;
	\item Stage~3: minimal structural rules;
	\item Stage~4: local validation of these rules;
	\item Stage~5: exact variance decomposition by interaction order.
\end{itemize}

It therefore connects exhaustive combinatorial exploration, lattice-based structural abstraction, and exact spectral attribution within a single framework. The methodology progresses from \emph{combinatorial exhaustivity} (Stages 0--2) through \emph{structural abstraction} (Stages 3--4) to \emph{quantitative attribution} (Stage 5).


\section{Results}   

\subsection{Differential Resistence}

This analysis for differential fitness produced converging evidence across marginal effects (Step 1), pairwise interactions (Step 2),  formal concept rules (Step 3), and full-dataset validation (Step 4). Six structural findings stand out  ---  and several directly invert the conclusions from linear resistance.

\paragraph{R is the universal mandatory driver.}
R (Regularity) is the only constraint with a stable positive individual effect
($\mu = +0.154$, $\mathrm{CV} = 0.77 < 1$, 30/31 hypercube edges positive). Its individual lift is $3.50$: when R is active, the Fit\_bin rate rises to $88\%$ versus $25\%$ without it. The FCA stem base yields the rule $\varnothing \Rightarrow \texttt{R}$ with support $= 1.00$ over the 9 elite architectures --- the strongest possible implication. All 15 top-ranked combinations contain R.

R creates uniform propagation depth, which is the structural prerequisite for
differential uniformity: input differences spread uniformly across all output bits only when every path has the same length. \emph{R should always be activated for this fitness objective.}

\paragraph{S+R is the dominant epistatic antagonism.}
The pair S+R has $\mathrm{SNR} = 1.71$ --- the only pair in the study with
$\mathrm{SNR} > 1$, making it a context-independent, fully reliable signal. The mean epistasis is $\varepsilon = -16.39\%$ across all 16 unit squares (16/16 antagonistic, 0/16 synergistic). The same antagonism exists in linear resistance ($\mathrm{SNR} = 0.95$) but is almost twice as strong here.

Stratification imposes asymmetric layer structure that directly conflicts with R's uniform depth discipline. \emph{S and R should never be combined for differential fitness.}

\paragraph{L is individually neutral, epistatically positive.}
L (Locality) has near-zero individual effect ($\mu = -0.011$, $\mathrm{CV} = 6.60$,
$\mathrm{Lift} = 1.00$ exactly). Yet the top three architectures are \texttt{A+R+L} $(1.000)$, \texttt{R+L} $(0.992)$, \texttt{A+R+H+L} $(0.982)$. L's contribution is purely epistatic: it amplifies R but contributes nothing alone.

For differential uniformity, bounded connection distance reinforces R's uniform differential propagation rather than obstructing it. \emph{Activate L in combination with R; never alone.}

\paragraph{H and I are individual degraders; H synergizes with S.}
Both H (Homogeneity) and I (Interleaving) show consistent negative individual effects
(\texttt{H}: $\mu = -0.061$, $\mathrm{CV} = 1.25$, 25/31 negative edges; \texttt{I}: $\mu = -0.052$, $\mathrm{CV} = 1.60$, 21/31 negative edges) and both have $\mathrm{Lift} = 0.80$.
However, the pair S+H shows $\mathrm{SNR} = 1.14$ ($\varepsilon = +9.52\%$, 14/16 synergistic) --- H reverses its role when combined with S.

\texttt{H} and \texttt{I} disrupt \texttt{R}'s uniform differential propagation. \texttt{S+H} forms a compensating S/P-box structure where \texttt{H}'s layer uniformity reinforces S's alternation rhythm. \emph{H and I should be avoided in the R pathway; H is viable in the S pathway only.}

\paragraph{A  is a pure contextual amplifier.}
\texttt{A} (Acyclicity) has $\mu \approx 0$ ($\mathrm{CV} = 30.28$) and $\mathrm{Lift} = 0.89$ --- no individual effect, slightly negative alone. Yet \texttt{A+R} is the top combination by mean score ($0.8419$) and elite rate ($0.38$). The FCA elite rule $\mathtt{I+R} \Rightarrow \mathtt{A}$ shows that within the \texttt{R} backbone, adding \texttt{I} implies A.

A eliminates feedback cycles that would break differential uniformity in an otherwise well-structured circuit. \emph{Use \texttt{A} as a refinement on top of \texttt{R}, never in isolation.}

\paragraph{Two design strategies.}
Two structurally independent paths reach high differential fitness: (1)~the \textbf{R-backbone}: R $+$ A $+$ L ($\pm$ H), best architecture \texttt{A+R+L} (score $1.000$), governed by the mandatory rule $\varnothing \Rightarrow \texttt{R}$; (2)~the \emph{S-pathway}: S $+$ H as a secondary micro-strategy (best \texttt{S+H}, score $0.820$). The R-backbone dominates completely: all 9 elite architectures contain R.

\paragraph{Summary tables.}

\begin{table}[H]
	\centering \rowcolors{2}{white}{gray!15} \small
	\begin{tabular}{cp{0.7cm}p{2.7cm}p{6cm}} \hline
		\textbf{Order} & \textbf{Var} & \textbf{Leading terms} & \textbf{Interpretation} \\ \hline
		1 & \textbf{64.1} & $R,\ S,\ H,\ I$ & Strongly additive; $R$ alone accounts for 44.3\%, $S$ and $H$ partially cancel \\
		2 & 22.1 & $S{\times}R$, $S{\times}H$, $R{\times}I$ & $S{\times}R$ antagonism dominates (12.7\%);	$S{\times}H$ synergy partially offsets \\
		3 & 7.6 & $R{\times}I{\times}L$, $A{\times}R{\times}I$, $A{\times}H{\times}L$
		& Diffuse ternary structure; $L$ enters the landscape only at order~3 \\
		4 & 5.2 & $S{\times}A{\times}I{\times}L$, $S{\times}R{\times}I{\times}L$
		& Residual four-way interactions involving $I$ and $L$ \\
		5 & 1.1 & $S{\times}R{\times}I{\times}H{\times}L,\ \ldots$ & Near-negligible high-order terms \\
		6 & 0.0 & $S{\times}A{\times}R{\times}I{\times}H{\times}L$ & Fully-constrained interaction; negligible \\
	\end{tabular}
	\caption{Spectral variance (\%) decomposition by interaction order for differential fitness}
\end{table}

\begin{figure}[H]
	\centering
	\includegraphics[width=\textwidth]{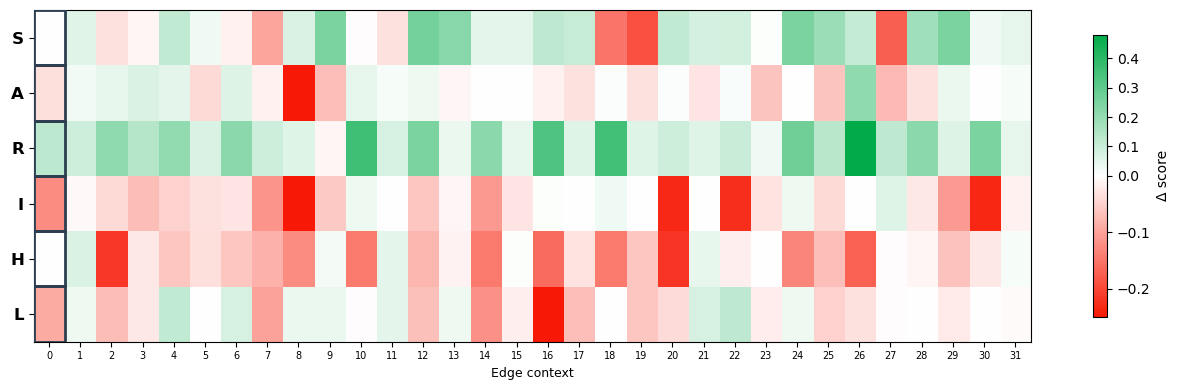}
	\caption{Sign of $\Delta C_i$ across 32 contexts (green = +, white = 0, red = --)}
	\label{heatmap_dif}
\end{figure}

\begin{figure}[H]
	\centering
	\includegraphics[width=6cm]{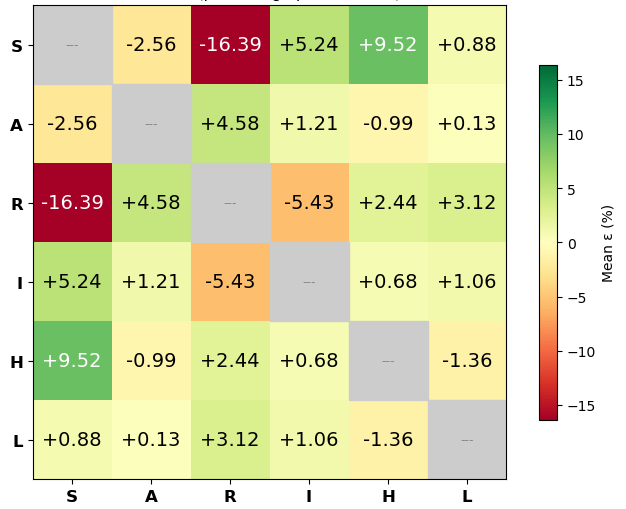}
	\caption{Mean epistasis  $\mu (\epsilon) / 1024$ (green = synergy, red = antagonism)}
	\label{epistasis_dif}
\end{figure}

\begin{table}[H]
	\centering \rowcolors{2}{white}{gray!15} \small
	\begin{tabular}{cp{4.3cm}cp{4.5cm}}
		\hline
		\textbf{C} & \textbf{Role} & \textbf{Walsh \%} & \textbf{Observation} \\ \hline
		$R$ & Dominant positive driver & 44.28 & Necessary condition; $\varnothing\!\Rightarrow\!\texttt{R}$ with support 1.00 in elite context \\
		$S$ & Cond. positive / antagonist & 7.23 & Positive alone but strong antagonism with $R$ \\
		$H$ & Consistent individual degrader & 6.67 & Negative alone; synergist in $S{\times}H$ \\
		$I$ & Consistent individual degrader
		& 5.51 & Negative alone (Lift\,$=0.80$); activates only via $A{+}R$ backbone at order~3 \\
		$L$ & Silent epistatic synergist & 0.34 & Near-zero individual effect; enters exclusively through $R{\times}I{\times}L$ and higher-order terms \\
		$A$ & Pure epistatic amplifier & $<0.1$ & No marginal effect; activates via $A{\times}R$ and $A{\times}R{\times}I$ \\
	\end{tabular}
	\caption{Role of each constraint across all analysis steps
		--- differential fitness}
\end{table}

\subsection{Linear Resistence}

This analysis reveals a design space that has grown substantially more complex across successive generator improvements. Six structural findings emerge from the convergence of marginal effects, pairwise epistasis, formal concept analysis, and LASSO regression.

\paragraph{Homogeneity remains the dominant positive driver  ---  but is weakening.}

Homogeneity ($H$) retains the highest individual lift ($2.22\times$ on \texttt{Fit\_bin})
and the broadest presence in the elite sub-space ($14/17 = 82\%$ of \texttt{Fit\_elite}
architectures). Its marginal mean is $\mu = +2{,}535$, the strongest positive signal across all constraints. However, its coefficient of variation has risen to $\mathrm{CV} = 1.41$, markedly less
stable than in runs~1--3 where $\mathrm{CV} \in [0.62,\, 0.68]$. $H$'s contribution in the LASSO model is only $3.8\%$ of explained variance  ---  a modest share in a 27-term model  ---  confirming that $H$ is increasingly a necessary but insufficient condition, whose full effect is realised only through interaction context.

\paragraph{The S--L symmetry is the defining structural feature.}

Stratification ($S$) and Locality ($L$) now exhibit near-identical marginal profiles:
$\mu_S = -2{,}062$ ($\mathrm{CV} = 1.64$, $23/31$ negative edges) and
$\mu_L = -2{,}139$ ($\mathrm{CV} = 1.39$, $25/31$ negative edges), with identical lift values of $0.53\times$ on \texttt{Fit\_bin}. Both are consistent inhibitors with comparable magnitude and directional stability. Strikingly, their pairwise epistasis achieves $\mathrm{SNR} = 0.792$  ---  the second strongest interaction in the dataset  ---  revealing a compensatory synergy: when both $S$ and $L$ are simultaneously active, their negative individual effects partially cancel. This structural regularity, absent from all previous runs, suggests that $S$ and $L$ restrict the circuit's wiring degrees of freedom through complementary mechanisms that constructively interfere under joint activation.

\paragraph{R+I is an antagonism.}

In runs~1--3, Regularity and Interleaving formed a positive synergy ($\mathrm{SNR} \in [0.35,\, 0.82]$). In run~5, this interaction reverses: $\varepsilon_{R,I} = -2.39\mathrm{k}$, $\mathrm{SNR} = 0.692$. Interleaving's cross-block wiring creates path-length variations that Regularity attempts to suppress, resulting in a structural conflict under joint activation. This inversion is the clearest evidence that the generator architecture fundamentally reshapes constraint interaction structure across versions, not merely rescaling individual contributions.

\paragraph{R+H is the dominant pairwise signal.}

For the third consecutive run, R+H achieves the highest pairwise SNR ($0.926$),
yet remains below the reliability threshold of $1.0$.
The three-way term $S \times R \times H$ is the single largest LASSO contributor
($+8.4\%$), indicating that R+H synergy is real but increasingly mediated by
higher-order context: the benefits of combining Regularity and Homogeneity are
amplified when Stratification is also active, despite $S$'s negative main effect.

\paragraph{Spectral variance decomposition.}

With 27 terms selected by LASSO and $R^2 = 0.880$, run~5 presents the most complex
model of the series. No single term exceeds $8.4\%$ of explained variance  --- 
in contrast to runs~1--3 where $H$ alone accounted for $19$--$26\%$.
The top-5 contributors together represent only $35\%$ of explained variance,
compared to $35$--$55\%$ for the top-3 terms in earlier runs.
This dispersion reflects a genuinely distributed landscape where numerous constraint
combinations make small but non-negligible contributions,
making it fundamentally harder to optimise with small populations.

\paragraph{Summary tables.}

\begin{table}[H]
	\centering \rowcolors{2}{white}{gray!15}
	\begin{tabular}{clp{3.5cm}p{5cm}} \hline
		\textbf{Order} & \textbf{Var (\%)} & \textbf{Leading terms} & \textbf{Interpretation} \\
		\hline
		1 & \textbf{42.2} & $H,\ L,\ S,\ R$ & Partially additive; dominated by $H$, $L$, $S$ in opposition \\
		2 & 24.8 & $R{\times}H$, $S{\times}R$, $R{\times}I$ & Strong pairwise structure; $R{\times}H$ alone accounts for 9.7\% \\
		3 & 24.4 & $S{\times}R{\times}H$, $S{\times}A{\times}I$, $A{\times}H{\times}L$ & Diffuse ternary epistasis of comparable weight to order~2 \\
		4 & 6.7  & $S{\times}A{\times}R{\times}H$, $A{\times}R{\times}I{\times}L$ & Residual context-specific effects \\
		5 & 1.9  & $A{\times}R{\times}I{\times}H{\times}L,\ \ldots$ & High-order interactions \\
		6 & 0.1  & $S{\times}A{\times}R{\times}I{\times}H{\times}L$ & Negligible \\
	\end{tabular}
	\caption{Spectral variance decomposition by interaction order -- linear fitness}
\end{table}

\begin{figure}[H]
	\centering
	\includegraphics[width=\textwidth]{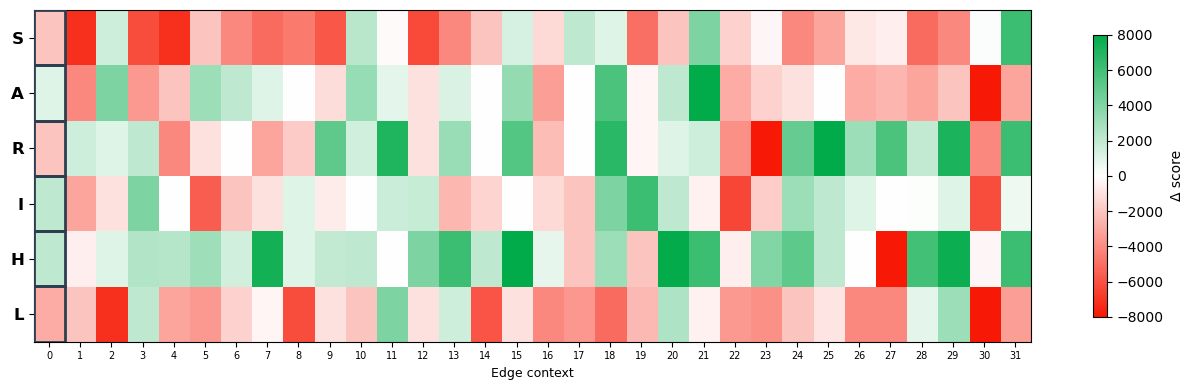}
	\caption{Sign of $\Delta C_i$ across 32 contexts (green = +, white = 0, red = --)}
	\label{heatmap_dif}
\end{figure}

\begin{figure}[H]
	\centering
	\includegraphics[width=6cm]{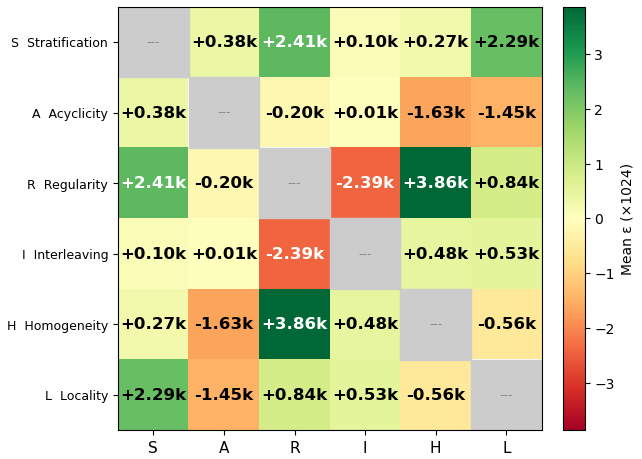}
	\caption{Mean epistasis  $\mu (\epsilon) / 1024$ (green = synergy, red = antagonism)}
	\label{epistasis_dif}
\end{figure}

Table~\ref{tab:constraint_roles} summarises the role of each constraint as
identified across the five analytical steps.

\begin{table}[H]
	\centering \rowcolors{2}{white}{gray!15}
	\begin{tabular}{clcp{4.5cm}}
		\hline
		\textbf{C} & \textbf{Role} & \textbf{Walsh \%} & \textbf{Observation} \\\hline
		$H$ & Dominant positive driver         & 15.75 & Necessary condition; activating $H$ consistently raises fitness \\
		$L$ & Consistent negative driver       & 11.57 & Symmetric with $S$; avoid unless paired with $S$ \\
		$S$ & Inhibitor / conditional synergist & 10.53 & Negative alone; recovers in $S{\times}R{\times}H$ \\
		$R$ & Positive backbone                & 4.10  & Activate with $H$; antagonistic with $I$ alone \\
		$A$ & Pure epistatic amplifier         & $<$1  & No marginal effect; activates via $A{\times}R$ and $A{\times}H$ \\
		$I$ & Neutral / context-dependent      & $<$1  & Near-zero marginal; selective use with $A{+}H$ only \\
	\end{tabular}
	\caption{Role of each constraint across all analysis steps  ---  linear fitness, run~5}
\end{table}

\subsection{Resistence to algebraic attacks}

Scores are continuous floats in $[12.0,\ 15.0]$  ---  34 distinct values, quasi-unimodal distribution with no majority plateau. Total variance is $\mathrm{Var}(f) = 0.696$ (standard deviation $0.834$). The Generic architecture (all=0) scores $12.875$, at the 22nd percentile: the unconstrained circuit is genuinely mediocre, and constraints consistently improve performance over this baseline.

\paragraph{R is the dominant constraint  ---  necessary and sufficient for the top tier.}

$R$ is the only constraint present in all 10 Fit\_elite architectures (100\%), versus 50\% in the full dataset. Its mean marginal effect is $\mu(\Delta R) = +1.30$, $\mathrm{CV} = 0.46$ (stable)  ---  the only constraint with $\mathrm{CV} < 1$. The Generic $\to$ R+Generic edge produces the largest observed jump: $+1.75$ ($12.875 \to 14.625$). $R$ alone reaches $14.625$ (Fit\_elite); $R+L$ equally reaches $14.625$.

FCA formalises this necessity: the rule $\varnothing \Rightarrow R$ (support $= 1.0$, coverage $= 10$) establishes that $R$ is a \emph{necessary condition} for membership in the top-10. In Fit\_top (score $> 13.75$), the rule $L \Rightarrow R$ (support $= 0.375$, coverage $= 12$) indicates that $L$ without $R$ always falls below the threshold. The lift of $R$ is $\mathbf{7.0}$  ---  the Fit\_top rate rises from $12.5\%$ ($R$ inactive) to $87.5\%$ ($R$ active). The Walsh decomposition confirms and quantifies: $\hat{f}(\{R\}) = -0.657$ accounts for \textbf{62.1\% of total variance}  ---  the dominant coefficient, $6\times$ larger than the second ($L$, $10.8\%$).

\paragraph{L is the main destructor  ---  avoid without R.}

$\mu(\Delta L) = -0.56$, $\mathrm{CV} = 1.09$ (moderate). $L$ is negative in 26 of 31 regular edges. The two worst architectures are H+L ($12.0$) and I+L ($12.125$). The lift of $L$ is $0.60$  ---  $L$ reduces the Fit\_top rate from $62.5\%$ to $37.5\%$.

Walsh: $\hat{f}(\{L\}) = +0.274$ (positive sign $\Rightarrow$ activating $L$ \emph{reduces} fitness), \textbf{10.8\% of variance}. $L$ is the second-largest additive effect, in direct opposition to $R$. Combined with $R$, $L$ is neutral ($R+L = 14.625$); alone, it is destructive.

\paragraph{S+A is the dominant synergy  ---  unexpected and robust.}

The pair $S+A$ has the highest epistatic SNR ($\mathrm{SNR} = 0.73$), with $\mu(\varepsilon) = +0.55$; synergy is positive in 9 of 16 contexts. In isolation, $S$ and $A$ have weak and unstable effects ($\mathrm{CV} = 2.6$ and $6.2$ respectively), but their co-activation produces a systematic gain.

LASSO: $S \times A$ is the 2nd contributor with $14.1\%$ of standardised importance (coefficient $= +0.291$). Walsh confirms: $\hat{f}(\{A,S\}) = +0.138$, $2.7\%$ of variance. $S$ appears in 40\% of Fit\_elite architectures  ---  a rehabilitation from run v2, where it was a destructor.

\paragraph{S+R is a strong antagonism.}

$\mu(\varepsilon_{S,R}) = -0.45$, $\mathrm{SNR} = 0.58$ (2nd highest), antagonism in 12/16 contexts. Despite $S$ appearing in the two best architectures (S+R+I+H and S+R+H, both score $15.0$), the simultaneous activation of $S$ and $R$ is \emph{sub-additive} on average: the combination delivers less than the sum of the two individual effects.

LASSO: the three-way term $S \times A \times R$ is the 5th contributor at $-9.2\%$.

\paragraph{H is a stable secondary driver; I is neutral.}

$\mu(\Delta H) = +0.21$, $\mathrm{CV} = 2.97$. $H$ is present in 60\% of Fit\_elite architectures; LASSO: $H$ is the 3rd term ($10.7\%$ importance); Walsh: $\hat{f}(\{H\}) = -0.112$, $1.81\%$ of variance. The interaction $H \times L$ is antagonistic (LASSO: $-7.1\%$). By contrast, $\mu(\Delta I) = -0.093$, $\mathrm{CV} = 5.66$, lift $= 1.00$; $I$ does not discriminate Fit\_top and contributes only through higher-order interactions ($S \times R \times I$ in LASSO, term 11).

\paragraph{Spectral variance decomposition.}

The \textbf{epistasis ratio is 22.6\%}  ---  the landscape is predominantly additive ($R$ dominates) but 22.6\% of variance arises from genuine interactions not capturable by a linear model. The \textbf{effective dimensionality is $2.5/63$} --- the landscape behaves as a function of 2 to 3 effective variables despite 6 nominal dimensions.

LASSO selects 13 terms ($R^2 = 0.84$) and recovers 4 of the 10 largest Walsh coefficients. Terms shared by both analyses ($R$, $L$, $S \times A$, $H$, $H \times L$) carry the strongest combined analytical support.

\paragraph{Summary tables.}

\begin{table}[H]
	\centering \rowcolors{2}{white}{gray!15}
	\begin{tabular}{clp{3cm}p{5cm}} \hline
		\textbf{Order} & \textbf{Var (\%)} & \textbf{Leading terms} & \textbf{Interpretation} \\
		\hline
		1 & \textbf{77.4} & $R,\ L,\ S,\ H,\ A,\ I$ & Near-additive landscape dominated by $R$ \\
		2 & 7.5 & $S\times A$, $R\times S$,$H\times L$, $A\times L$ & Moderate pairwise interactions \\
		3    & 5.6 & $A\times H\times L$, $I\times L\times S$, $H\times R\times S$ & Diffuse ternary epistasis \\
		4    & 3.8 & $H{\times}I{\times}L{\times}R,\ \ldots$ & Residual context-specific effects \\
		5    & 5.7 & $A{\times}H{\times}I{\times}L{\times}R,\ \ldots$ & High-order interactions \\
		6    & 0.02 & $S{\times}A{\times}R{\times}I{\times}H{\times}L$ & Negligible 
	\end{tabular}
	\caption{Spectral variance decomposition by interaction order -- Algebraic fitness}
\end{table}

\begin{figure}[H]
	\centering
	\includegraphics[width=\textwidth]{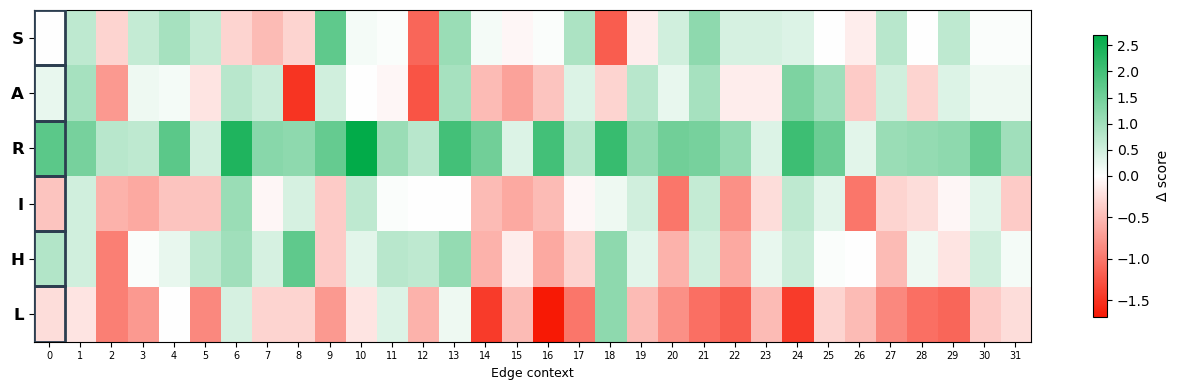}
	\caption{Sign of $\Delta C_i$ across 32 contexts (green = +, white = 0, red = --)}
	\label{heatmap_dif}
\end{figure}

\begin{figure}[H]
	\centering
	\includegraphics[width=6cm]{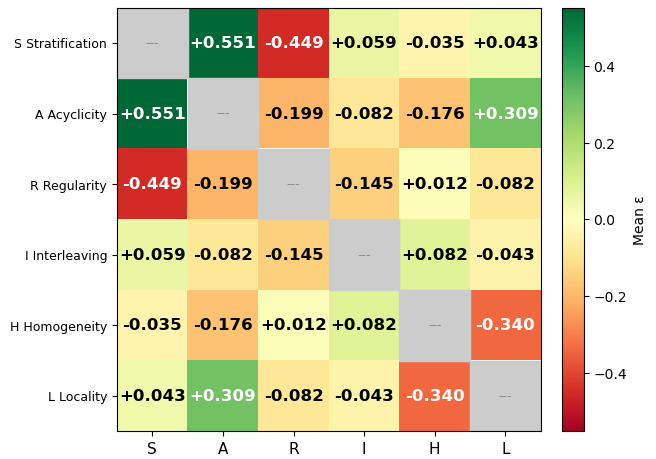}
	\caption{Mean epistasis  $\mu (\epsilon) / 1024$ (green = synergy, red = antagonism)}
	\label{epistasis_dif}
\end{figure}

\begin{table}[H]
	\centering \rowcolors{2}{white}{gray!15}
	\begin{tabular}{clcp{4cm}}
		\hline
		\textbf{C} & \textbf{Role} & \textbf{Walsh \%} & \textbf{Observation} \\\hline
		$R$ & Universal necessary condition & 62.1 & Backbone of every top architecture \\
		$H$ & Stable secondary driver      & 1.81 & Activate on $R$ backbone \\
		$S$ & Synergistic with $A$ and $R$ & 1.88 & Activate with $R$ or $A$; never alone \\
		$A$ & Synergistic with $S$          & $<1$ & Activate to trigger $S \times A$ \\
		$I$ & Neutral / unstable            & $<1$ & Selective activation only \\
		$L$ & Destructor without $R$       & 10.8 & Avoid unless $R$ is active \\
		\hline
	\end{tabular}
	\caption{Role of each constraint across all analysis steps}
\end{table}

\subsection{Top-performing architectures}

Table~\ref{multifitness} summarises the top-performing architectures for each fitness objective. No single architecture achieves elite rank in all three simultaneously: \texttt{A+R+L} is the differential champion (score $1.000$) but contains neither \texttt{H} nor \texttt{S}, which are required for linear and algebraic performance respectively. Conversely, the algebraic co-champions \texttt{S+R+I+H} and \texttt{S+R+H} (score $15.0$) activate both \texttt{S} and \texttt{R} in a high-order compensating context that degrades differential uniformity. The architectures with the strongest multi-fitness profile are those that combine \texttt{R} with \texttt{H} without \texttt{S}: \texttt{R+H} and \texttt{A+R+H} are consistently present in the upper quartile for all three objectives, with \texttt{R} providing the structural backbone, H supplying additive gain for linear and algebraic resistance, and the absence of S avoiding the dominant antagonism. Among architectures containing all three key components (\texttt{R}, \texttt{H}, \texttt{L}), \texttt{A+R+H+L} appears in the top-$3$ for differential fitness, is structurally aligned with the linear R$+$H backbone, and benefits from the $R+L$ neutrality observed for algebraic degree. It is the closest single architecture to a Pareto-efficient solution in this design space.

\begin{table}[H]
	\centering \rowcolors{2}{white}{gray!15} \small
	\begin{tabular}{llll}
		\hline
		\textbf{Arch.} & \textbf{Differential} & \textbf{Linear} & \textbf{Algebraic} \\ \hline
		\texttt{A+R+L}   & \textbf{Elite} (score $1.000$) & Moderate (no H) & Strong (R backbone) \\
		\texttt{R+L}     & \textbf{Elite} (score $0.992$) & Moderate (no H) & \textbf{Elite} ($14.625$) \\
		\texttt{A+R+H+L} & \textbf{Elite} (score $0.982$) & Strong (R+H) & Strong \\
		\texttt{R+H}     & Strong (R active) & \textbf{Dominant pair} & Strong (H secondary) \\
		\texttt{S+R+H}   & Degraded (S$\times$R) & Strong (3-way $S{\times}R{\times}H$) & \textbf{Elite} (score $15.0$) \\
		\texttt{S+H}     & Secondary ($0.820$) & Mixed (S inhibitor) & Moderate \\
		\texttt{Generic} & Weak & Weak & 22nd pct. \\
		\hline
	\end{tabular}
	\caption{Qualitative multi-fitness profile of key architectures}
	\label{multifitness}
\end{table}

\section{Discussion}   

\paragraph{Landscape complexity as a function of constraint count.}

A notable transversal finding is that multi-constraint architectures are not monotonically better. Architectures with 5--6 active constraints systematically appear in the middle tiers of the differential ranking, where the combinatorial overhead of mutually antagonistic constraints prevents the genetic algorithm from locating the structural configurations that matter. The data point to a \emph{sparse constraint hypothesis}: elite cryptographic performance in this design space is achieved not by activating many structural rules simultaneously, but by identifying and enforcing the minimal subset of non-antagonistic constraints that are co-aligned with a given fitness objective. In all three fitnesses, the best architectures use at most four constraints, and the constraint \texttt{R} is universally present.

The broader implication is epistemological. Classical cipher design implicitly treats architectural constraints as engineering specifications --- rules that guarantee desirable properties by construction, where more rules mean more guarantees. The present results suggest a different framing: in a Boolean network design space, constraints are \emph{fitness landscape shapers}, and their interactions can be as determinative as their individual effects. Activating a constraint does not simply add a property; it restructures the entire landscape of reachable architectures, potentially activating antagonisms that negate the intended benefit. Designing for cryptographic quality at the circuit level is therefore less a matter of accumulating structural guarantees and more a matter of identifying the sparse, mutually compatible constraint combinations that collectively channel the search towards high-fitness regions --- a fundamentally combinatorial and epistatic problem that classical design vocabulary was not built to address.

\paragraph{Performance of Standard Architectures.}

The two dominant paradigms of symmetric cipher design --- Feistel networks and
Substitution-Permutation Networks --- can be located precisely in the constraint
space as \texttt{A+I} and \texttt{S+A+R+I+H} respectively.
Their rankings across the three fitness objectives are summarised in
Table~\ref{ranking}.

\begin{table}[H]
	\centering \small
	\begin{tabular}{lccc}
		\hline
		\textbf{Architecture} & \textbf{Differential} & \textbf{Linear} & \textbf{Algebraic} \\ \hline
		Feistel (\texttt{A+I}) & 52/64 & 23/64 & 56/64 \\
		SPN (\texttt{S+A+R+I+H}) & 21/64 & 3/64 & 16/64 \\
		\hline
	\end{tabular}
	\caption{Ranking of standard architectures across fitness objectives}
	\label{ranking}
\end{table}

The Feistel architecture (\texttt{A+I}) performs consistently poorly, placing in the bottom third across all three objectives and in the bottom $12\%$ for differential and algebraic fitness. This is not paradoxical: Feistel's canonical strength --- the guaranteed invertibility of the round function regardless of its algebraic properties --- is a cipher-construction property, not an S-box quality criterion. The epistatic interaction between \texttt{A} and \texttt{I} is positive in both the linear and algebraic matrices, confirming that the pair is at least structurally coherent; but this synergy is too weak to compensate for the absence of any constraint linked to the dominant fitness drivers. In short, Feistel encodes \emph{reversibility} rather than \emph{cryptographic quality}: it is a cipher-engineering paradigm that scores poorly in a design space optimised for single-pass S-box strength.

The SPN architecture (\texttt{S+A+R+I+H}) tells a more nuanced and instructive story. Its linear resistance ranking (3rd out of 64) confirms that the SPN template directly encodes the structural prerequisites for spectral flatness: the \texttt{S+H} combination instantiates exactly the alternating nonlinear/linear layer structure that maximises diffusion and nonlinearity degree in a single pass, while \texttt{R} and \texttt{A} ensure structural regularity and acyclicity. Yet the same architecture falls to 21st place for differential fitness --- a drop of 18 positions that reveals a fundamental tension internal to the SPN constraint set. The culprit is precisely the (\texttt{S,R}) antagonism, identified as the strongest and most reliable epistatic signal in the differential landscape: stratification's asymmetric layer-type alternation is structurally incompatible with Regularity's uniform-depth discipline, and this conflict is activated in the SPN template. The property that makes SPN excellent for linear resistance thus actively degrades its differential performance, exposing a \emph{fitness-objective trade-off that is intrinsic to the SPN design paradigm}.

Both classical architectures diverge sharply in their relationship to \texttt{R}. The SPN template activates \texttt{R}, which provides the backbone for its dominant linear performance; but \texttt{R}'s benefit for differential fitness is precisely what the $\texttt{S,R}$ antagonism undermines. By contrast, the absence of \texttt{R} explains the poor ranking of the Feistel model for algebraic and differential fitness. Uniform propagation depth is a micro-structural property of the computational graph that has no natural formulation in the round-function vocabulary of classical cipher design, and Feistel never required it. The present results suggest that \texttt{R} is not merely one useful constraint among others, but a structural prerequisite whose explicit enforcement separates the elite architectures in this design space from all classical templates that either ignore it or activate it only to have its benefit partially cancelled by an antagonistic co-constraint.

These conclusions must be interpreted within the boundaries of the present framework. The 64-architecture design space explored here is the product of six independent binary constraints chosen for their orthogonality and structural interpretability --- a deliberately broad combinatorial canvas that includes many configurations no classical designer would ever consider. It is in this context that Feistel's mediocre ranking should be particularly appreciated. What the results do establish is that the \emph{design vocabulary} of classical cipher construction --- built around round functions, Feistel branches, and S/P alternation --- explores only a thin slice of the structural space available at the Boolean network level, and that this slice is not the one that optimises single-step cryptographic quality across all three criteria simultaneously.

\section{Conclusion and Future Work} 

Shannon's foundational intuition --- that strong ciphers require the systematic alternation of confusion and diffusion operations --- finds a precise structural expression in this framework as the Stratification constraint \texttt{S}, which enforces alternating nonlinear and linear layers. The results both vindicate and qualify this intuition. They vindicate it in the sense that \texttt{S} contributes positively to linear and algebraic resistances: the confusion-diffusion principle captures something real about the structure of high-fitness circuits.

Shannon's foundational intuition --- that strong ciphers require the systematic alternation of confusion and diffusion operations --- finds a precise structural expression in this framework as the Stratification constraint \texttt{S}, which enforces alternating nonlinear and linear layers. The results both vindicate and qualify this intuition. They vindicate it in the sense that \texttt{S} contributes positively to linear and algebraic resistances: the confusion-diffusion principle captures something real about the structure of high-fitness circuits. But the most striking finding is not that \texttt{S} matters, but that a structurally deeper constraint --- Regularity (\texttt{R}), which enforces uniform propagation depth across all circuit paths --- is the only universal necessary condition across all three fitness objectives, a principle that has no counterpart in Shannon's confusion-diffusion vocabulary. Shannon's insight was formulated at the level of functional alternation between operation types; the present results suggest that an equally fundamental structural principle operates one level below, at the topology of the computational graph itself.

The relationship between \texttt{S} and \texttt{R} is, however, not one of complementarity but of antagonism. The \texttt{S}+\texttt{R} interaction is the strongest and most reliable epistatic signal identified in this study ($\mathrm{SNR} = 1.71$, $16/16$ antagonistic contexts for differential fitness), and it is consistently negative across all three objectives unless mediated by a compensating third constraint. Stratification's asymmetric layer-type alternation is structurally incompatible with Regularity's uniform-depth discipline: the two principles pull in opposite directions at the circuit level. This antagonism is arguably the central empirical result of this work, and it carries a precise epistemological implication. Shannon's confusion-diffusion paradigm and the regularity principle identified here are not two components of a unified design theory --- they are competing structural logics, each capturing a distinct and partially irreconcilable aspect of cryptographic quality. The confusion-diffusion paradigm describes \emph{what} operations should alternate; Regularity specifies \emph{how uniformly} their effects should propagate. That these two desiderata conflict at the circuit level suggests that the design space of Boolean networks is richer, and more constrained, than either principle alone would imply.

\paragraph{Open Problems and Future Directions.}

The results presented in this work open several research directions that extend well beyond the immediate design space explored here.

\subparagraph{Multi-objective Pareto optimisation.}
The three fitness objectives studied here --- differential uniformity, linear resistance, and algebraic degree --- were analysed independently, leaving the multi-objective landscape largely unexplored. The qualitative cross-fitness profiles assembled in this work suggest that the Pareto frontier is narrow: a small number of architectures appear competitive across all three criteria simultaneously, with \texttt{R+L} and \texttt{A+R+H+L} as plausible non-dominated candidates. Characterising this frontier rigorously requires a dedicated multi-objective genetic algorithm --- NSGA-II or MOEA/D being natural candidates --- capable of maintaining a diverse population of non-dominated solutions across the full 64-architecture space. Beyond architecture selection, the multi-objective framing raises a deeper question: whether the epistatic antagonisms identified within single fitness landscapes (most notably \texttt{S+R}) manifest as structural conflicts between objectives, or whether they are objective-specific phenomena that cancel across the Pareto frontier. The answer would clarify whether a single universal constraint backbone exists for multi-criteria cryptographic optimisation, or whether any non-dominated architecture necessarily embodies a genuine design trade-off.

\subparagraph{Scalability of the constraint landscape.}
All results reported here were obtained on 16-bit Boolean networks. The extent to which the structural conclusions --- the universality of R, the \texttt{S+R} antagonism, the sparse constraint hypothesis --- generalise to larger circuit sizes is an open empirical question. Bit width is not merely a computational parameter: increasing from 16 to 32 or 64 bits changes the combinatorial structure of the differential distribution table, the Walsh spectrum density, and the effective dimensionality of the fitness landscape in ways that are not analytically tractable. It is conceivable that the R-backbone dominance weakens at larger scales, where uniform path depth becomes harder to enforce without imposing prohibitive structural rigidity, or conversely that it strengthens, as the penalty for non-uniform differential propagation grows with circuit width. Systematically replicating the six-constraint analysis at 32 and 64 bits --- with appropriately scaled genetic algorithm budgets and GPU-accelerated fitness evaluation --- would establish whether the present findings constitute a universal principle of Boolean circuit cryptographic design or a size-specific regularity.

\subparagraph{Extension to iterated circuits.}
The single-pass evaluation model adopted in this work isolates the intrinsic structural quality of each architecture but deliberately abstracts away the iterative round structure that defines practical block ciphers. Extending the framework to multi-round SBNs --- where the same constrained Boolean network is applied repeatedly to its own output, with round keys mixed at each step --- would bridge the gap between S-box optimisation and full cipher design. This extension is non-trivial in several respects. Fitness evaluation cost grows linearly with the number of rounds, making differential resistance assessment particularly expensive. More fundamentally, the constraint interaction structure may change qualitatively under iteration: antagonisms that are damaging in a single pass may become irrelevant or even beneficial when their effects compound across rounds, as the Shannon confusion-diffusion argument implicitly assumes. The central question is whether the sparse R-backbone architectures that dominate single-pass performance retain their advantage in an iterated setting, or whether the additional mixing provided by multiple rounds effectively neutralises architectural differences --- in which case classical design intuition would be recovered as a limiting case of the present framework.

\subparagraph{Cryptographic Robustness / Hardware Implementation Cost Trade-off.}
The fitness objectives studied --- differential uniformity, linear resistance, and algebraic degree --- are purely cryptographic criteria evaluated on the Boolean network's input-output behaviour. They say nothing about the cost of implementing the winning architectures in hardware. Yet the constraints that drive elite cryptographic performance are precisely those that tend to increase implementation complexity. The architectures that dominate the differential fitness ranking (\texttt{A+R+L}, \texttt{R+L}) are, from a hardware perspective, unusually constrained graphs whose silicon area and critical-path delay have not been evaluated. A natural and practically important extension of this work is therefore to augment the fitness framework with a hardware cost model --- gate count, wire length, or circuit depth under a realistic cell library --- and to study the Pareto frontier between cryptographic quality and implementation efficiency across the 64 architectures. It is an open question whether the elite cryptographic architectures identified here remain competitive once hardware cost is factored in, or whether the sparsely constrained R-backbone configurations impose an implementation overhead that outweighs their cryptographic advantage over classical SPN templates.


\bibliographystyle{plain}
\bibliography{sbn.bib}
	
\appendix

\section{Proof of proposition \ref{proof} by witnesses}

We write the predicate vector of an enriched SBN as $\nu(\mathcal C)=(S,A,R,I,H,L)\in\mathbb{B}^6$. For each coordinate, we exhibit two witnesses differing only on that coordinate.
	
\paragraph{1. Toggling \(S\).}
	Take $V=\{a,b,c\},\qquad E=\{a\to b,\ b\to c\}$, with layer map $\ell(a)=1$, $\ell(b)=2$, $\ell(c)=3$, boundary sets $V_{\mathrm{in}}=\{a\}$, $V_{\mathrm{out}}=\{c\}$, one single block, and a metric such that all used edges are local.
	
	For $\mathcal C_S^+$, choose an alternating type map $\tau(1)=\mathsf S$, $\tau(2)=\mathsf P$, $\tau(3)=\mathsf S$, and choose local rules consistent with these layer types.
	
	For \(\mathcal C_S^-\), keep the same graph, same layer map, same blocks, same metric, but choose a non-alternating typing, for instance $\tau(1)=\mathsf S$, $\tau(2)=\mathsf S$, $\tau(3)=\mathsf P$, again with local rules consistent with these assigned layer types.
	
	Then $S(\mathcal C_S^+)=1$, $S(\mathcal C_S^-)=0$, while all other predicates are unchanged: $A=1$, $R=1$, $I=0$,$H=1,\quad L=1$. Hence
	\[
	\nu(\mathcal C_S^+)=(1,1,1,0,1,1),\qquad
	\nu(\mathcal C_S^-)=(0,1,1,0,1,1).
	\]

\paragraph{2. Toggling \(A\).}
	Take \(V=\{a,b\}\), with $\ell(a)=1$, $\ell(b)=2$, $\tau(1)=\mathsf S$, $\tau(2)=\mathsf P$, $V_{\mathrm{in}}=\{a\}$, $V_{\mathrm{out}}=\{b\}$, one block, and local metric.
	
	Let $E^+=\{a\to b\}$, $E^-=\{a\to b,\ b\to a\}$.
	Then $T=1$ in both cases, since the alternating layer typing is unchanged and does not constrain edge orientation. However, $A(\mathcal C_A^+)=1$, $A(\mathcal C_A^-)=0$.
	All other predicates remain unchanged: $R=1$, $I=0$, $H=1$,$L=1$. Thus
	\[
	\nu(\mathcal C_A^+)=(1,1,1,0,1,1),\qquad
	\nu(\mathcal C_A^-)=(1,0,1,0,1,1).
	\]

\paragraph{3. Toggling \(R\).}
	For \(R=1\), take vertices $a\in V_1$, $b,d\in V_2$, $c,e\in V_3$, with $\tau(1)=\mathsf S$, $\tau(2)=\mathsf P$, $\tau(3)=\mathsf S$, $V_{\mathrm{in}}=\{a\}$, $V_{\mathrm{out}}=\{c,e\}$, one block, homogeneous local rules inside each layer, and local metric. Let $E^+=\{a\to b,\ b\to c,\ a\to d,\ d\to e\}$. Then $\tau_G(a,c)=\tau_G(a,e)=2$, so $R=1$. Also $S=1$, $A=1$, $I=0$, $H=1$, $L=1$.
	
	For $R=0$, take vertices $a\in V_1$, $b,d\in V_2$, $e\in V_3$, with the same alternating typing $\tau(1)=\mathsf S$, $\tau(2)=\mathsf P$, $\tau(3)=\mathsf S$, $V_{\mathrm{in}}=\{a\}$, $V_{\mathrm{out}}=\{d,e\}$, and $E^-=\{a\to d,\ a\to b,\ b\to e\}$. Then $\tau_G(a,d)=1$, $\tau_G(a,e)=2$, hence \(R=0\), while still $S=1$, $A=1$, $I=0$, $H=1$, $L=1$. Therefore
	\[
	\nu(\mathcal C_R^+)=(1,1,1,0,1,1),\qquad
	\nu(\mathcal C_R^-)=(1,1,0,0,1,1).
	\]

\paragraph{4. Toggling \(I\).}
	Take the same graph \(a\to b\to c\) with $\ell(a)=1$, $\ell(b)=2$, $\ell(c)=3$, $\tau(1)=\mathsf S$, $\tau(2)=\mathsf P$, $\tau(3)=\mathsf S$, $V_{\mathrm{in}}=\{a\}$, $V_{\mathrm{out}}=\{c\}$, and local metric.
	
	For \(I=0\), choose one block: $b(a)=b(b)=b(c)=1$. For \(I=1\), choose for instance $b(a)=1$, $b(b)=2$, $b(c)=2$. Then the edge $a\to b$ is cross-block, so $I=1$. In both cases, $S=1$, $A=1$, $R=1$, $H=1$, $L=1$. Thus
	\[
	\nu(\mathcal C_I^-)=(1,1,1,0,1,1),\qquad
	\nu(\mathcal C_I^+)=(1,1,1,1,1,1).
	\]

\paragraph{5. Toggling \(H\).}
	Take $a\in V_1,\qquad b,c\in V_2,\qquad d\in V_3$, with graph $E=\{a\to b,\ a\to c,\ b\to d,\ c\to d\}$, one block, $\tau(1)=\mathsf S$,$\tau(2)=\mathsf P$, $\tau(3)=\mathsf S$, $V_{\mathrm{in}}=\{a\}$, $V_{\mathrm{out}}=\{d\}$, and local metric. Then $S=1$, $A=1$, $R=1$,$I=0$, $L=1$.
	
	For \(H=1\), choose identical affine unary rules on \(b\) and \(c\), for instance $f_b(x)=x$,$f_c(x)=x$. For \(H=0\), choose non-equivalent affine unary rules, e.g. $f_b(x)=x$,$f_c(x)=1+x$. These two rules are both of type \(\mathsf P\), so \(S\) remains equal to \(1\), but they are not equivalent up to input permutation. Hence
	\[
	\nu(\mathcal C_H^+)=(1,1,1,0,1,1),\qquad
	\nu(\mathcal C_H^-)=(1,1,1,0,0,1).
	\]

\paragraph{6. Toggling \(L\).}
	Take again the graph \(a\to b\to c\), with one block and singleton layers, so that $\ell(a)=1$, $\ell(b)=2$, $\ell(c)=3$, $\tau(1)=\mathsf S$, $\tau(2)=\mathsf P$, $\tau(3)=\mathsf S$, and therefore $S=1$, $A=1$,$R=1$, $I=0$, $H=1$. Keep the graph, block map, layer map, type map, and local rules fixed. For \(L=1\), choose a metric \(\rho^+\) and radius \(\delta_0\) such that $\rho^+(a,b)\le \delta_0$, $\rho^+(b,c)\le \delta_0$.
	For \(L=0\), choose another metric \(\rho^-\) with $\rho^-(a,b)>\delta_0$, $\rho^-(b,c)\le \delta_0$. Then $L(\mathcal C_L^+)=1$, $L(\mathcal C_L^-)=0$, while the other five predicates are unchanged:
	\[
	\nu(\mathcal C_L^+)=(1,1,1,0,1,1),\qquad
	\nu(\mathcal C_L^-)=(1,1,1,0,1,0).
	\]
	
	\medskip
	Collecting the six witness pairs proves the claim. \hfill $\square$


\section{Reward and Fitness Functions}

Let \(F : \mathbb{F}_2^n \rightarrow \mathbb{F}_2^m\) denote the Boolean transformation implemented by a BC. In the evolutionary framework, \emph{fitness functions} provide global adversarial evaluation criteria, whereas \emph{reward functions} provide local structural signals used to guide mutation and exploration.

\subsection{Fitness Functions}

We evaluate the cryptographic quality of \(F\) through three standard criteria related to differential, linear, and algebraic cryptanalysis. Each criterion defines a fitness function to be maximized.

\paragraph{Differential Fitness (F1)}

For \(\alpha \in \mathbb{F}_2^n\) and \(\beta \in \mathbb{F}_2^m\), the differential distribution table is $\mathrm{DDT}_F(\alpha,\beta)=
\left|\left\{x \in \mathbb{F}_2^n:F(x)\oplus F(x\oplus \alpha)=\beta\right\}\right|$.

The maximum differential probability is
\[
\mathrm{DP}_{\max}(F)
=
\max_{\alpha\neq 0,\;\beta}
\frac{\mathrm{DDT}_F(\alpha,\beta)}{2^n}.
\]

A cryptographically strong function is expected to satisfy \(\mathrm{DP}_{\max}(F)\leq \varepsilon_D\) for some small threshold \(\varepsilon_D\). We define $\mathrm{Fit}_{\mathrm{diff}}(F)=-\log_2\!\big(\mathrm{DP}_{\max}(F)\big)$.

Maximizing \(\mathrm{Fit}_{\mathrm{diff}}\) amounts to minimizing the success probability of the best differential characteristic.

\paragraph{Linear Fitness (F2)}

For \(a \in \mathbb{F}_2^n\) and \(b \in \mathbb{F}_2^m\), define the correlation
\[
C_F(a,b)
=
\frac{1}{2^n}
\sum_{x\in\mathbb{F}_2^n}
(-1)^{a\cdot x \oplus b\cdot F(x)}.
\]

The maximum linear bias is $\mathrm{LP}_{\max}(F) = \max_{a\neq 0,\;b\neq 0}|C_F(a,b)|$.

Equivalently, in Walsh form,
\[
\max_{a\neq 0,\;b\neq 0}
\left|
\sum_{x\in\mathbb{F}_2^n}
(-1)^{a\cdot x \oplus b\cdot F(x)}
\right|
\leq \varepsilon_L 2^n,
\]
for some small threshold \(\varepsilon_L\).

We define $\mathrm{Fit}_{\mathrm{lin}}(F) = -\log_2\!\big(\mathrm{LP}_{\max}(F)\big)$.

Maximizing \(\mathrm{Fit}_{\mathrm{lin}}\) reduces the strength of the best linear approximation.

\paragraph{Algebraic Fitness (F3)}

Algebraic cryptanalysis exploits low-degree polynomial relations between input and output variables. A strong Boolean transformation should therefore exhibit high algebraic complexity.

Let \(F_i\) denote the \(i\)-th coordinate function of \(F\), written in Algebraic Normal Form (ANF):
\[
F_i(x)
=
\bigoplus_{u\in\mathbb{F}_2^n}
a_{i,u}\,x^u.
\]

The algebraic degree of \(F\) is $\deg(F)=\max_{1\leq i\leq m}\deg(F_i)$.

To capture the growth of algebraic complexity under iteration, we consider
\[
\deg_k(F)
=
\deg\!\big(F^{(k)}\big).
\]

The algebraic fitness is then defined by $\mathrm{Fit}_{\mathrm{alg}}(F)=\deg_k(F)$.

Maximizing \(\mathrm{Fit}_{\mathrm{alg}}\) favors transformations whose algebraic complexity grows rapidly under composition, providing a practical proxy for resistance to algebraic attacks.

\subsection{Reward Functions}

Fitness functions are global and adversarial, but often expensive and locally flat. We therefore complement them with reward functions based on structural variations between a circuit \(F\) and a mutated circuit \(F'\). Each reward is defined as a differential indicator \(R_i(F\rightarrow F')\).

\paragraph{Effective Algebraic Degree Reward (R1)}
Let $\deg_k(F)=\max_{1\leq i\leq m}\deg\!\big(F_i^{(k)}\big)$.
We define $R_1(F\rightarrow F')=\deg_k(F')-\deg_k(F)$.

R1 promotes mutations increasing multi-round algebraic complexity.

\paragraph{Algebraic Normal Form Entropy (R2)}

For each coordinate \(F_i\), let $M_i=\{u\in\mathbb{F}_2^n : a_{i,u}=1\}$ be the set of monomials appearing in its ANF. Define $H_{\mathrm{ANF}}(F)=\frac{1}{m}\sum_{i=1}^{m}\frac{\log_2(|M_i|+1)}{n}$.

Then $R_2(F\rightarrow F') = H_{\mathrm{ANF}}(F')-H_{\mathrm{ANF}}(F)$.

This reward promotes dense and less factorizable ANF representations.

\paragraph{Walsh Spectrum Flattening (R3)}

For each coordinate \(F_i\), define the Walsh transform
$W_{F_i}(a)=\sum_{x\in\mathbb{F}_2^n}(-1)^{F_i(x)\oplus a\cdot x}$.

Let $S(F)=\frac{1}{m}\sum_{i=1}^{m}\mathrm{Var}_{a\neq 0}\!\big(W_{F_i}(a)\big)$.

Then $R_3(F\rightarrow F') = S(F)-S(F')$.

Reducing spectral dispersion discourages dominant linear correlations.

\paragraph{Local Differential Uniformization (R4)}

Define $\mathrm{DDT}_F(\alpha,\beta)=\left|\{x : F(x)\oplus F(x\oplus\alpha)=\beta\}\right|$,
and $D(F)=\mathrm{Var}_{\alpha\neq 0,\beta}\left(\frac{\mathrm{DDT}_F(\alpha,\beta)}{2^n}\right)$.

Then $R_4(F\rightarrow F') = D(F)-D(F')$.

This reward penalizes uneven differential behavior and dominant differential transitions.

\paragraph{Dependency Graph Expansion (R5)}

Let \(M(F)\) be the dependency matrix
\[
M_{i,j}(F)=
\begin{cases}
	1 & \text{if } F_i \text{ depends on } x_j,\\
	0 & \text{otherwise}.
\end{cases}
\]

Let $\rho(F)=\mathrm{rank}(M(F))$.

Then $R_5(F\rightarrow F') = \rho(F')-\rho(F)$.

This reward promotes stronger input-output entanglement.

\paragraph{Symmetry Breaking Reward (R6)}

Let \(\mathrm{Aut}(F)\) denote the automorphism group of the circuit structure, and define $\sigma(F)=|\mathrm{Aut}(F)|$.

Then $R_6(F\rightarrow F') = \sigma(F)-\sigma(F')$.

This reward penalizes structural symmetries and repeated exploitable patterns.

\section{SBN Explorer}

The software \emph{SBN Explorer} is available on GitHub.

\subsection{Genetic Algorithm Configuration}

\paragraph{Formal classification}
The implemented algorithm is a \emph{Generational Genetic Algorithm with $(\mu+\lambda)$ Elitism and Truncation Selection}, also described in the literature as a $(\mu+\lambda)$-ES with Poisson mutation. It differs from a classical genetic algorithm in the absence of a crossover operator: reproduction is entirely asexual, which brings it closer to Evolution Strategies in the sense of Schwefel \cite{schwefel1995}. This design choice is motivated by the structure of the genotype — a Boolean gate network whose cryptographic properties depend on non-contiguous distributed patterns across the circuit graph, which crossover would likely disrupt. The algorithm can therefore be characterised as an ES/GA hybrid optimised for structured search spaces.

\paragraph{Population Size}
The default population size is set to 20 individuals. This deliberately compact value reflects a constraint of the experimental design: since the algorithm is run once per architecture across all 64 combinations, the computational budget is distributed across the full design space rather than concentrated on a single run. With a population of 20, the elitist component retains the 4 best individuals ($\lfloor 20/5 \rfloor = 4$, i.e.\@ 20\%) as parents for the next generation.

\paragraph{Number of generations}
The default number of generations is 50. Combined with a population of 20, this yields 1,000 fitness evaluations per architecture, or 64,000 evaluations in total for the full experiment. Given that a differential resistance evaluation requires approximately 12 seconds on GPU, a complete run over all 64 architectures takes roughly 200 hours.

\paragraph{Mutation rate}

Mutation operates at two independent levels. At the individual level, each call to \texttt{mutate\_sbn} draws the number of successive mutations from a Poisson distribution with parameter $\lambda$: the number of applied mutations is $k \sim \mathrm{Poisson}(\lambda)$, with a floor of 1. Within each mutation, two channels are applied independently with probability 0.15 per gate: an \emph{operation mutation} that replaces the gate's Boolean function, and a \emph{wiring mutation} that rewires one of its inputs. The default value is $\lambda = 3.0$.

\begin{table}[H]
	\centering
	\begin{tabular}{ccc}
		\textbf{Mutation rate $\lambda$} & \textbf{Mean mutations} & \textbf{Genome modification} \\ \hline
		2.0 & 2.0 & $\sim$3\% \\
		3.0 & 3.0 & $\sim$5\% \\
		4.0 & 4.0 & $\sim$6\% \\
	\end{tabular}
	\caption{Impact of mutation rate on genome modification. An SBN has 16 internal gates with, on average, 4 connections each, giving a genome of approximately 64 genes.}
\end{table}

\paragraph{Selection and reproduction}
The algorithm follows a generational replacement scheme with strong elitism, classifiable as a $(\mu + \lambda)$-evolution strategy with truncation selection. Each generation, the 20\% best-scoring individuals are selected as parents (truncation selection). The next population is formed by copying the parents intact (elitism), then filling the remaining slots by repeatedly drawing a random parent and applying a Poisson-distributed chain of mutations. No crossover operator is used: recombination between two parents risks destroying functional structures that may be distributed non-contiguously across the circuit graph. The asexual strategy, combined with the Poisson number of mutations, naturally balances local exploitation (few mutations) and exploratory jumps (many mutations) within the same generation.

\paragraph{Exploration vs exploitation trade-off}
The Poisson distribution over mutation counts produces a heterogeneous offspring population: some individuals undergo minor perturbations (1--2 mutations, exploiting the current solution) while others undergo significant rewirings (5 or more mutations, exploring distant regions of the search space). This implicit diversity mechanism partly compensates for the absence of crossover and the pressure of truncation selection, which would otherwise cause premature convergence.

\subsection{Software Configuration}
The experiment is implemented as a Jupyter notebook (\texttt{sbn\_generator\_v1.ipynb}) which orchestrates the full pipeline: loading the SBN module, configuring the GPU accelerator, selecting the fitness function and architecture constraints, and iterating the genetic algorithm over all 64 architecture combinations. Results are saved as a CSV file indexed by architecture rank. The core SBN library (\texttt{sbn\_module\_v1.py}) handles circuit generation, constraint enforcement, mutation, and fitness evaluation.

\subsection{Harware Configuration}
	
Experiments were performed on a GPU instance equipped with a NVIDIA A10 GPU (shared), an AMD EPYC 7513 CPU (single shared core), and 4 GB RAM.\\
\textbf{Warning}: This code does NOT support CPU-only execution.
	
\end{document}